\newcommand{\remove}[1]{}
\newtheorem{theorem}{Theorem}[section]
\newtheorem{claim}[theorem]{Claim}
\newtheorem{lemma}[theorem]{Lemma}
\newtheorem{definition}[theorem]{Definition}
\newtheorem{remark}[theorem]{Remark}
\newcommand{\E}{\mathbb{E}}
\newcommand{\F}{\mathbb{F}}
\newcommand{\eps}{\varepsilon}
\newcommand{\x}{\mathbf{x}}
\newcommand{\f}{\mathbf{f}}
\newcommand{\m}{\mathbf{m}}
\renewcommand{\r}{\mathbf{r}}
\newcommand{\g}{\mathbf{g}}
\newcommand{\G}{\mathbf{G}}
\newcommand{\p}{\bm{\pi}}
\newcommand{\y}{\mathbf{y}}
\renewcommand{\v}{\mathbf{v}}
\newcommand{\zz}{\mathbf{z}}
\newcommand{\ftwo}{\mathbb F_2}
\def\E{\mathop{\mathbb{E}}\displaylimits}
\newcommand{\R}[0]{{\ensuremath{\mathbb{R}}}}
\newcommand{\Z}[0]{{\ensuremath{\mathbb{Z}}}}
\title{Optimality of Linear Sketching under Modular Updates}
\author{
Kaave Hosseini\thanks{Supported by NSF grant CCF-1614023.}\\
University of California, San Diego\\
\texttt{skhossei@ucsd.edu}
\and
Shachar Lovett\thanks{Supported by NSF grant CCF-1614023.}\\
University of California, San Diego\\
\texttt{slovett@ucsd.edu}
\and
Grigory Yaroslavtsev\\
Indiana University Bloomington\\
\texttt{grigory.yaroslavtsev@gmail.com}
}
\begin{document}
\maketitle

\begin{abstract}
We study the relation between streaming algorithms and linear sketching algorithms, in the context of binary updates. We show that for inputs in $n$ dimensions,
the existence of efficient streaming algorithms which can process $\Omega(n^2)$ updates implies efficient linear sketching algorithms with comparable cost.
This improves upon the previous work of Li, Nguyen and Woodruff~\cite{LNW14} and Ai, Hu, Li and Woodruff~\cite{AHLW16} which required a triple-exponential number of updates to achieve a similar result for updates over integers. We extend our results to updates modulo $p$ for integers $p \ge 2$, and to approximation instead of exact computation.
\end{abstract}

\section{Introduction}
Linear sketching has emerged in the recent years as a fundamental primitive for algorithm design and analysis including streaming and distributed computing.
Applications of linear sketching include randomized algorithms for numerical linear algebra (see survey~\cite{W14}), graph sparsification (see survey~\cite{M14}), frequency estimation~\cite{AMS99}, dimensionality reduction~\cite{JL84}, various forms of sampling, signal processing, and communication complexity. 
In fact, linear sketching has been shown to achieve optimal space complexity~\cite{LNW14, AHLW16} for processing very long dynamic data streams, which allow elements to be both inserted and deleted. 
Linear sketching is also a frequently used tool in distributed computing --- summaries communicated between processors in massively parallel computational settings are often linear sketches.

In this paper we focus on linear sketches for functions evaluated modulo $p$. Namely, functions of the form $f \colon \mathbb{Z}_p^n \to [0,1]$. Informally, the main result of our work is that for computing such functions linear sketching modulo $p$ achieves almost optimal space complexity in dynamic streaming and distributed simultaneous communication settings. In particular, the setting of
$p$ a power of two (say, 32 or 64) is relevant as CPUs perform computations modulo such powers of two.

\paragraph{Exact sketching for binary data.}
We start with presenting our result in the simplest setting, where $p=2$ and where the output of $f$ is binary. Namely, we are interested in computing a given Boolean function of the form $f(x) \colon \{0,1\}^n \to \{0,1\}$ using only a small sketch of the input.
In this context it is natural to consider sketches which are linear functions over the finite field $\ftwo$. Due to their prominence in design of dynamic streaming graph algorithms and other applications~\cite{AGM12a,AGM12b,CCHM15,K15,BHNT15,MTVV15,EHW16,AKLY16,FT16,AKL17b,CCEHMMV16,KNPWWY17} a study of such $\ftwo$-sketches has been initiated in~\cite{KMSY18}. 

\begin{definition}[Exact $\ftwo$-sketching,~\cite{KMSY18}]\label{def:f2-sketch}
	The \emph{exact randomized $\ftwo$-sketch complexity} with error $\delta$ of a function $f \colon \ftwo^n \to \{0,1\}$ is the smallest integer $k$ such that there exists a distribution 
over linear functions $\ell_1,\ldots,\ell_k: \ftwo^n \to \ftwo$ and a post-processing function $h:\ftwo^k \rightarrow \{0,1\}$ that satisfies:
	$$
	\forall x \in \ftwo^n \colon \Pr_{\ell_1,\ldots,\ell_k,h}\left[
	h(\ell_1(x),\ell_2(x),\ldots, \ell_k(x)) = f(x)\right] \ge 1-\delta.
	$$
\end{definition}

In particular, $\ftwo$-sketches naturally allow one to design algorithms for processing data streams in the XOR update model~\cite{T16} which we refer to as just XOR streams below. 
In this model the input $x \in \{0,1\}^n$ is generated via a sequence of additive updates to its coordinates $i_1, \dots, i_t$ where each $i_j \in [n]$.
Formally, let $x_0 = 0^n$ and let $x_{j} = x_{j - 1} \oplus e_{i_j}$ where $e_{k}$ is the $k$-th unit vector.
This corresponds to flipping the bit in position $i_j$ in $x$ at time $j$ and after applying the sequence of updates the resulting input is $x = x_t$.
The goal of the streaming algorithm is to output $f(x)$.
It is easy to see that by flipping linear functions which depend on $x_{i_j}$ when the update $i_j$ arrives one can maintain an $\ftwo$-sketch through the XOR stream.
Hence the size of the $\ftwo$-sketch gives an upper bound on the space complexity of streaming algorithms in XOR streams. To be more precise,
this equivalence if up to 
logarithmic factors, and follows from a standard application of Nisan's pseudorandom generator~\cite{N90}, which allows to generate the linear functions involved in the sketch in small space. We discuss this further in \Cref{app:prg}.

Whether this simple approach in fact achieves optimal space complexity for streaming applications is one of the central questions in the field.
Two structural results regarding space optimality $\ftwo$-sketching for dynamic streaming are known: 
\begin{enumerate}
\item $\ftwo$-sketches achieve optimal space for streams of length $2^{2^{2^{\Omega(n)}}}$ ~\cite{LNW14,AHLW16,KMSY18}. 
\item $\ftwo$-sketches achieve optimal space for streams of length $\tilde O(n)$ under the assumption that updates are uniformly random~\cite{KMSY18}.
\end{enumerate}
It is open whether optimality of $\ftwo$-sketching holds for short streams without any assumptions about the distribution of updates.
In fact, it was conjectured in~\cite{KMSY18} that such optimality might hold for streams of length only $2n$ (see also Open Problem 78 on \url{http://sublinear.info} from Banff Workshop on Communication Complexity and Applications, 2017).

In this paper we make major progress towards resolving the gap between the two results discussed above. In particular, we show the following theorem.
\begin{theorem}\label{thm:f2}
Let $f \colon \{0,1\}^n \to \{0,1\}$.
Assume that there exists a streaming algorithm for computing $f$ over XOR streams of length $\Omega(n^2)$,
which uses $c$ bits of space. Then the exact randomized $\mathbb{F}_2$-sketch complexity of $f$ is $O(c)$.
\end{theorem}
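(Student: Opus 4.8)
The plan is to show that, when the given streaming algorithm $A$ (whose reachable state space has size at most $2^c$) is run on a carefully randomized stream whose net update vector equals the input $x$, the distribution of its final state — hence of its output — depends on $x$ only through a fixed linear map $\pi\colon \ftwo^n \to \ftwo^{O(c)}$ that does not depend on $x$. Granting this, the $\ftwo$-sketch is essentially immediate: the sketch of $x$ is $\pi(x)$, a linear function into $O(c)$ bits, and the post-processing $h$ samples the stream randomness together with $A$'s internal coins, simulates $A$ on the resulting randomized stream using only its $\pi(x)$-dependent part, and returns $A$'s answer. Since for every fixed $x$ the randomized stream is a legal XOR stream of length $\Omega(n^2)$ with net vector $x$, correctness of $A$ gives that its output equals $f(x)$ with probability at least $1-\delta$, so $h(\pi(x)) = f(x)$ with comparable probability. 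Crucially this is a worst‑case‑in‑$x$ statement, exactly as \Cref{def:f2-sketch} requires, with no union bound over inputs.

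The randomized stream is built by interleaving two kinds of updates: \emph{noise} updates, each a uniformly random unit vector $e_t$, and \emph{correction} updates, which flip exactly the coordinates needed so that the running net vector — which after a block of noise updates is some nearly uniform $\mathbf v \in \ftwo^n$ — is steered back towards $x$. The correction updates are governed by the $n$ bits $x_1\oplus \mathbf v_1,\dots,x_n\oplus \mathbf v_n$, which for fixed noise randomness are affine functions of $x$; to keep the sketch small one must show that $A$'s $c$-bit state cannot retain all $\Theta(n)$ of these bits. This is where the stream length is spent: the $\Theta(n)$ correction updates are spread throughout the stream with $\Theta(n)$ noise updates inserted between consecutive corrections, so that between the moment a given correction bit is revealed to $A$ and the end of the stream a random walk of length $\Theta(n)$ acts on the net vector. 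A mixing / Fourier‑analytic argument on $\ftwo^n$ in the spirit of~\cite{KMSY18} then shows that, up to small statistical error, the dependence of the final state on the tuple of correction bits factors through an $O(c)$-dimensional $\ftwo$-linear functional of it — intuitively, a $c$-bit bottleneck fed through a well‑mixed random walk can forward only $O(c)$ linear coordinates of what it has seen. Composing this functional with the affine dependence of the correction bits on $x$ (folding the constant term into $h$) yields the required fixed linear map $\pi$ of dimension $O(c)$.

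The main obstacle is precisely this structural claim: quantifying how a space-$c$ branching program processing a partly $x$-dependent, partly random sequence of single‑coordinate flips must, after enough mixing, have its terminal state distribution determined by an $O(c)$-dimensional $\ftwo$-linear sketch of the $x$-dependent part, with statistical error small enough to be absorbed into the $\delta$ budget for \emph{every} $x$. Controlling that error is what forces $\Theta(n)$ noise steps after each correction bit and hence the total length $\Omega(n^2)$; pushing the linear‑algebraic estimate through with a clean $O(c)$ (rather than $\tilde O(c)$) bound on the sketch dimension, matching the theorem, is the delicate quantitative point. The extensions to updates modulo $p$ and to approximate computation should then follow by the same scheme, with the random walk taken on $\mathbb{Z}_p^n$ and $\ftwo$-Fourier analysis replaced by the characters of $\mathbb{Z}_p^n$.
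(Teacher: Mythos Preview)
Your proposal identifies a plausible intuition but leaves the entire technical content as an unproven black box: the ``structural claim'' you flag as the main obstacle \emph{is} the theorem, and nothing in the write-up supplies a mechanism for it. The paper's proof is organized quite differently and does supply that mechanism.

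Concretely, the paper does not work directly with a single stream of unit-vector flips. It passes through a one-way $(N{+}1)$-player communication game in which players $1,\ldots,N$ hold independent \emph{uniform vectors} $x_1,\ldots,x_N\in\ftwo^n$ and player $N{+}1$ holds the single correction vector $x_{N+1}=x+x_1+\cdots+x_N$. After Yao's minimax (fixing a hard input distribution $D$), an averaging argument fixes one transcript $\pi=(m_1,\ldots,m_N)$ of the first $N$ players; conditioning on it produces sets $A_i=\{x_i:\text{player }i\text{ sends }m_i\}\subset\ftwo^n$ of density roughly $2^{-c}$, and the protocol's output becomes a fixed function $h$ of $x+y_1+\cdots+y_N$ with $y_i$ uniform in $A_i$. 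Chang's lemma, applied to each $A_i$, is what yields the $O(c)$ bound: the large Fourier coefficients of a set of density $2^{-c}$ lie in an $O(c)$-dimensional subspace $U$, and for $\gamma\notin U$ the product $\prod_i|\widehat{\varphi_{A_i}}(\gamma)|$ is at most $2^{-\Omega(N)}$, so one may add a uniform $v\in U^\perp$ to the argument of $h$ for free. The needed number of players is $N=\Omega(n)$; encoding each player's $n$-bit vector as $n$ single-coordinate updates is what gives stream length $\Omega(n^2)$.

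Relative to this, your scheme differs in three ways that each block the argument. First, all $x$-dependent information in the paper sits in the last player's input, after the transcript is frozen; your interleaving of single-bit corrections throughout the stream gives you no analog of the sets $A_i\subset\ftwo^n$ on which to run Chang's lemma. Second, your noise blocks are length-$\Theta(n)$ random walks with unit-vector steps, which do not mix on $\ftwo^n$ in $\Theta(n)$ steps (they are confined to one parity class and have $\widehat{\mu}(\gamma)=(1-2|\gamma|/n)$, so e.g.\ the all-ones character is not damped at all), whereas the paper's ``noise'' is a full uniform vector per player. Third, you aim for a single deterministic linear map $\pi$ valid for every $x$; in the paper's proof the subspace depends on the fixed transcript, which in turn was chosen using the hard distribution $D$, and it is Yao's minimax that upgrades the per-$D$ deterministic junta to a randomized linear sketch good for all $x$. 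Without that reduction, the claim that one fixed $\pi$ captures the final-state distribution for every $x$ is stronger than what is actually true or needed.
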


The proof of \Cref{thm:f2} follows a standard approach in this field, of proving lower bounds for one-way communication protocols. We refer the reader to \Cref{sec:F2} where the model is defined, and to \Cref{theorem:main} which implies \Cref{thm:f2}.

\paragraph{Extensions to updates modulo $p$.}

We now consider the more general streaming model where updates modulo $p$ are allowed, where $p \ge 2$ is an integer. In this model an underlying $n$-dimensional vector $x$ is initialized to $0^n$ and evolves through a sequence of additive updates to its coordinates. These updates are presented to the streaming algorithm as a sequence and have the form $x_i \leftarrow (x_i + \delta_t) \mod p$ changing the $i$-th coordinate by an additive increment $\delta_t$ modulo $p$ in the $t$-th update. Here $\delta_t$ can be an arbitrary positive or negative integer.
In this setting the streaming algorithm is required to output a given function $f$ of $\{0, \dots, p - 1\}^n$ in the end of the stream. 

The definition of the exact randomized $\mathbb{Z}_p$-sketch complexity of $f$ is the natural extension of the definition for $\mathbb{F}_2$.

\begin{definition}[Exact $\mathbb{Z}_p$-sketching]\label{def:Zp-sketch}
	The \emph{exact randomized $\mathbb{Z}_p$-sketch complexity} with error $\delta$ of a function $f \colon \mathbb{Z}_p^n \to \{0,1\}$ is the smallest integer $k$ such that there exists a distribution 
over linear functions $\ell_1,\ldots,\ell_k: \mathbb{Z}_p^n \to \mathbb{Z}_p$ and a post-processing function $h:\mathbb{Z}_p^k \rightarrow \{0,1\}$ that satisfies:
	$$
	\forall x \in \mathbb{Z}_p^n \colon \Pr_{\ell_1,\ldots,\ell_k,h}\left[
	h(\ell_1(x),\ell_2(x),\ldots, \ell_k(x)) = f(x)\right] \ge 1-\delta.
	$$
\end{definition}

\begin{theorem}\label{thm:fp}
Let $f \colon \mathbb{Z}_p^n \to \{0,1\}$.
Assume that there exists a streaming algorithm for computing $f$ over streams 
with modulo $p$ updates of length $\Omega(n^2 \log p)$, which uses $c$ bits of space. Then the exact randomized $\mathbb{Z}_p$-sketch complexity of $f$ is $O(c)$.
\end{theorem}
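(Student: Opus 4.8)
The plan is to prove \Cref{thm:fp} along the lines of \Cref{thm:f2}, by reducing to a one-way communication lower bound (the $\mathbb Z_p$ analog of \Cref{theorem:main}) and carrying the whole argument over from $\ftwo$ to the ring $\mathbb Z_p$, with the extra $\log p$ factor in the stream length paying for the richer submodule structure of $\mathbb Z_p^n$. First I would set up the streaming-to-communication reduction: a space-$c$ algorithm that is correct on mod-$p$ streams of length $\Theta(n^2\log p)$ is cut into $k=\Theta(n\log p)$ consecutive blocks of $n$ updates, yielding a $k$-player one-way protocol in which player $j$ holds $x_j\in\mathbb Z_p^n$, realizes it by a block of at most $n$ updates (padded to exactly $n$ if the length must be hit exactly), forwards the resulting $c$-bit memory state to player $j+1$, and the last player outputs $f\left(\sum_{j=1}^k x_j \bmod p\right)$ with error $\delta$; since correctness of the algorithm is invoked only once, on the full length-$kn$ stream, this reduction is insensitive to whether the algorithm is time-varying.

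The heart of the argument is the claim that any such $k$-player one-way protocol with $c$-bit messages and $k=\Omega(n\log p)$ induces a randomized $\mathbb Z_p$-sketch of $f$ with $O(c)$ linear forms and the same error, which I would establish by a linearization argument. View the protocol as a randomized automaton with at most $2^c$ states whose transitions are labelled by the players' inputs. If the automaton were oblivious to the order and number of its inputs, its final state would be a function $F\left(\sum_j x_j\right)$ of the sum alone; a short translation-invariance argument then shows the fibers of $F$ are cosets of a subgroup $M\le\mathbb Z_p^n$ of index at most $2^c$ on which $f$ is constant, and any quotient $\mathbb Z_p^n/M$ of size at most $2^c$ is computable by $O(c)$ linear forms over $\mathbb Z_p$ (decompose it into at most $c$ cyclic summands of order dividing $p$, and realize each projection as a single $\mathbb Z_p$-linear form followed by reduction), so $f$ would have a $\mathbb Z_p$-sketch of size $O(c)$. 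The genuine automaton is not order-oblivious --- inserting a net-zero detour between two players can move the state --- and one copes with this by an iterative stabilization scheme that spends a fresh batch of players at each stage to force more transitions to respect addition. Over $\mathbb Z_p$ this descent is run one layer at a time down the $p$-adic filtration $\mathbb Z_p\supset p\mathbb Z_p\supset\cdots$, using $\Omega(n)$ players per layer, for a total of $\Omega(n\log p)$ --- exactly the budget the stream length provides; for general $p$ one runs the scheme in succession over the prime-power factors of $p$ furnished by the Chinese remainder theorem, and the per-factor costs sum to $O(\log p)$.

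I expect the main obstacle to be carrying out the stabilization step over the ring $\mathbb Z_p$ rather than over a field: the ``approximate submodule'' objects that arise must be regularized using Fourier analysis and additive combinatorics over $\mathbb Z_p^n$, replacing the $\ftwo^n$ toolkit that suffices for \Cref{thm:f2}, and one must check that each pass down the filtration costs only $O(n)$ players while the error probability and the index of the extracted submodule degrade by only controlled amounts, so that $\Omega(n\log p)$ players really are enough. The remaining pieces --- the communication reduction above, the information-theoretic accounting of the $c$-bit messages, and the final repackaging of $M$ as $O(c)$ explicit linear sketches --- should transfer essentially verbatim from the binary case.
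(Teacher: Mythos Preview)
Your reduction from streaming to one-way communication is correct and matches the paper. The ``linearization argument'' you outline after that, however, departs from the paper's route and carries a real gap.

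You propose an automaton-stabilization scheme: regard the protocol as a $2^c$-state automaton and iteratively force its transitions to respect addition by spending batches of players, descending along the $p$-adic filtration $\mathbb Z_p\supset p\mathbb Z_p\supset\cdots$ (and then CRT over the prime-power factors). This is essentially the Li--Nguyen--Woodruff paradigm, and the difficulty you flag --- that each layer should cost only $O(n)$ players --- is precisely the obstruction that drives their argument to triply-exponential stream length. You offer no mechanism for why the stabilization would close up that fast, and the ``approximate submodule regularization via Fourier analysis'' you invoke at the end is left unspecified; as stated, the plan does not yield the $\Omega(n\log p)$ bound.

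The paper does \emph{not} iterate, stabilize, or descend a filtration. It proves the $\mathbb Z_p$ analog of \Cref{theorem:main} (namely \Cref{theorem:main3}) in a single Fourier-analytic step. After Yao, sample the first $N$ inputs uniformly and set the $(N{+}1)$-st so the sum is $\x\sim D$; average to find one transcript $\pi$ that is both not too rare ($\Pr[\pi]\ge 2^{-(c+1)N}$) and near-optimal in success probability. Conditioning on $\pi$ produces sets $A_1,\ldots,A_N\subset\mathbb Z_p^n$ and a deterministic output function $h$ of the last player's input. Now apply Chang's lemma over the abelian group $G=\mathbb Z_p^n$: the characters $\gamma$ for which $|\widehat{\varphi_{A_i}}(\gamma)|^2\ge 1/2$ for a majority of $i$ have a maximal dissociated subset of size $k=O(c)$; take $H$ to be its annihilator, so $|G/H|\le p^{O(c)}$. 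For every character outside that span the product $\prod_i |\widehat{\varphi_{A_i}}(\gamma)|$ is at most $2^{-N/8}$, hence adding a uniform $\v\in H$ changes $\E[h'(x+\y_1+\cdots+\y_N)]$ by at most $|G|\cdot 2^{-N/8}$; taking $N\ge 10n\log p$ makes this $2^{-\Omega(N)}$. The $H$-invariant function $x\mapsto\E[h(x+\y_1+\cdots+\y_N+\v)]$ is then the sketch, realized by $O(c)$ linear forms over $\mathbb Z_p$.

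In particular, the $\log p$ factor does not come from a $p$-adic descent: it enters only through $|G|=p^n$ in the Fourier tail bound and through the bound $|\langle\Gamma'\rangle|\le p^{|\Gamma'|}$ on the span of a dissociated set. No filtration or Chinese-remainder decomposition is needed, and the non-prime case requires nothing beyond stating Chang's lemma for a general finite abelian group of exponent $p$.
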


The proof of \Cref{thm:fp} for prime $p$ is very similar to the proof of \Cref{thm:f2}. However, for non-prime $p$ the proof is a bit more involved. We define the relevant model in \Cref{sec:groups},
and the relevant theorem that implies \Cref{thm:fp} is \Cref{theorem:main3}.

\paragraph{Extensions to approximation.}

It is also natural to consider real-valued functions $f \colon \{0, 1\}^n \to [0,1]$ and to allow the streaming algorithm to compute $f$ with error $\epsilon$ (see e.g.~\cite{YZ18}). It turns out that technically, a convenient notion of approximation is $\ell_2$ approximation. Namely, a randomized function $\g$ (computed by a streaming protocol, or a sketching protocol) $\eps$-approximates $f$ if
$$
\E \left[|f(x) - \g(x)|^2 \right] \le \eps \qquad \forall x \in \{0,1\}^n.
$$
A similar definition holds for more general functions $f:\mathbb{Z}_p^n \to [0,1]$.
The definition of approximate randomized $\mathbb{Z}_p$-sketch complexity is the natural extension of the previous definitions.

\begin{definition}[Approximate $\mathbb{Z}_p$-sketching]\label{def:Zp-sketch}
	The \emph{approximate randomized $\mathbb{Z}_p$-sketch complexity} with error $\delta$ of a function $f \colon \mathbb{Z}_p^n \to [0,1]$ is the smallest integer $k$ such that there exists a distribution 
over linear functions $\ell_1,\ldots,\ell_k: \mathbb{Z}_p^n \to \mathbb{Z}_p$ and a post-processing function $h:\mathbb{Z}_p^k \rightarrow \{0,1\}$ that satisfies:
	$$
	\forall x \in \mathbb{Z}_p^n \colon \E_{\ell_1,\ldots,\ell_k,h}\left[
	\left| h(\ell_1(x),\ell_2(x),\ldots, \ell_k(x)) - f(x) \right|^2\right] \le \delta.
	$$
\end{definition}

\begin{theorem}\label{thm:fp-approx}
Let $f \colon \mathbb{Z}_p^n \to \{0,1\}$.
Assume that there exists a streaming algorithm which $\eps$-approximates 
$f$ over streams with modulo $p$ updates of length $\Omega(n^2 \log p)$, which uses $c$ bits of space. Then the approximate randomized $\mathbb{Z}_p$-sketch complexity of $f$ with error $O(\eps)$ is $O(c)$.
\end{theorem}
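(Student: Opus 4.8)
The plan is to run the proof of \Cref{thm:fp} --- that is, the argument behind \Cref{theorem:main3} --- essentially verbatim, carrying an $\ell_2$-error parameter through every step in place of exact correctness. First I would perform the same streaming-to-communication reduction: an $\eps$-approximate streaming algorithm using $c$ bits of space over modulo-$p$ streams of length $\Omega(n^2\log p)$ yields a one-way multi-player protocol for $(x^{(1)},\dots,x^{(t)})\mapsto f(x^{(1)}+\cdots+x^{(t)})$ over $\Z_p^n$ with $t=\Theta(n)$ players, messages of length $O(c)$ (the internal randomness being generated by Nisan's PRG, as in \Cref{app:prg}), and the guarantee that for every input tuple the output $\g$ satisfies $\E[\,|\g-f(x^{(1)}+\cdots+x^{(t)})|^2\,]\le\eps$ over the shared randomness. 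The length $\Omega(n^2\log p)$ is exactly what lets each of the $\Theta(n)$ players install an arbitrary vector of $\Z_p^n$ into the stream, using $O(\log p)$ additive updates per coordinate. The only change from the exact case is that ``computes correctly'' is replaced throughout by ``computes with $\ell_2$-error at most $\eps$''.

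The core is the approximate analogue of \Cref{theorem:main3}: a $t$-player protocol as above, with $t=\Theta(n)$ and $O(c)$-bit messages, implies an approximate $\Z_p$-sketch of $f$ of size $O(c)$ with error $O(\eps)$. In the exact case one extracts, by a pigeonhole over message values combined with a ``fiber-difference'' computation across the $\Theta(n)$ players, $O(c)$ linear forms $\ell_1,\dots,\ell_k\colon\Z_p^n\to\Z_p$ --- cutting out a subgroup $H=\bigcap_i\ker\ell_i$ of index $p^{O(c)}$ --- under which $f$ is invariant; the sketch outputs $\ell_1(x),\dots,\ell_k(x)$ and post-processes with $f$ viewed as a function of the coset $x+H$, which is determined by $(\ell_i(x))_i$. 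For the approximate version I would have the same extraction produce $\ell_1,\dots,\ell_k$ such that, for every $x$, $\E_{h\in H}\big[\,|f(x)-f(x+h)|^2\,\big]=O(\eps)$. The sketch target is then the coset average $\bar f(x)=\E_{h\in H}[f(x+h)]$, a function of $(\ell_i(x))_i$ alone; by Jensen, $|f(x)-\bar f(x)|^2\le\E_{h\in H}|f(x)-f(x+h)|^2=O(\eps)$ for every $x$. Since $f$ is $\{0,1\}$-valued, rounding $\bar f$ pointwise to $\{0,1\}$ --- so the post-processing lands in $\{0,1\}$, as the definition requires --- changes the error by only a constant factor. Equivalently, in Fourier language the protocol forces a $(1-O(\eps))$ fraction of the Fourier mass of $f$ onto the annihilator $H^{\perp}$, and Parseval turns this mass concentration into the $\ell_2$ bound on $f-\bar f$.

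The main obstacle is obtaining the invariance bound $\E_{h\in H}|f(x)-f(x+h)|^2=O(\eps)$ for \emph{all} of $H$, not merely for a generating set. In the exact case this is free: invariance of $f$ under a set of generators implies invariance under the subgroup they generate, with no error incurred. With $\eps$-error the naive route --- writing a general $h\in H$ as a sum of $r$ generators and applying the triangle inequality --- costs a factor of $r^2$, and $r$ can be $\Theta(n)$, degrading the bound to the useless $O(n^2\eps)$. This is precisely why the reduction needs $t=\Theta(n)$ players, hence streams of length $\Omega(n^2\log p)$ rather than $\Omega(n\log p)$: the multi-player protocol supplies a fresh player for each potential generator, so that the relevant quantity is a sum of per-player local errors which aggregate additively --- the corresponding Fourier increments being orthogonal --- rather than a telescoping sum of $\ell_2$ norms. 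Concretely, I would define, for each player $i$ and each typical prefix transcript, a local invariance error; show that these sum to $O(\eps)$ over the $\Theta(n)$ players using the protocol's correctness guarantee; and then assemble the per-player fiber subgroups into a single $H$, cut out by $O(c)$ linear forms, for which the averaged-invariance bound holds at every $x$. Finally, as already in \Cref{thm:fp}, composite $p$ needs more care than prime $p$: for prime $p$ every index-$p^{O(c)}$ subgroup of $\Z_p^n$ is cut out by $O(c)$ linear forms, whereas for composite $p$ the extraction must be arranged to directly yield $O(c)$ forms $\Z_p^n\to\Z_p$ (an arbitrary subgroup of that index could need $\Omega(c\log p)$ generators), exactly as in the proof of \Cref{theorem:main3}; the $\ell_2$-error bookkeeping is insensitive to this refinement.
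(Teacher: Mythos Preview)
Your proposal diverges from the paper in a fundamental way, and the divergence stems from a mischaracterization of the exact case. In the paper's proof of \Cref{theorem:main}/\Cref{theorem:main3}, one does \emph{not} show that $f$ is $H$-invariant; one fixes a transcript $\pi$, obtains sets $A_1,\dots,A_N$ and a last-player function $h$, and then \emph{constructs} an $H$-invariant function $w(x)=\E_{y_i\in A_i,\,v\in H}[\,h(x+y_1+\cdots+y_N+v)\,]$ (which is $H$-invariant by the averaging over $v$, not because anything was proved about $f$). The content of the proof is that inserting the extra $v$ changes almost nothing: Chang's lemma (\Cref{lemma:chang}/\Cref{lemma:genchang}) bounds the span of the heavy characters of the $\varphi_{A_i}$ by $m^{O(c)}$, and for every $\gamma$ outside that span at least $|B|/2$ of the factors $|\widehat{\varphi_{A_i}}(\gamma)|$ are at most $1/\sqrt 2$, so the convolution $\prod_i\widehat{\varphi_{A_i}}(\gamma)$ is at most $2^{-N/8}$ there. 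The role of $N=\Theta(n\log p)$ is exactly to kill the crude $|G|$ prefactor in this Fourier tail (\Cref{claim:fourier}/\Cref{claim:fouriergen}); it has nothing to do with ``per-player local errors aggregating additively'' or ``orthogonal Fourier increments''.

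Because your plan is built on the wrong picture of the exact case, the ``main obstacle'' you identify --- bootstrapping approximate invariance of $f$ from generators to all of $H$ without the $r^2$ blowup --- is an obstacle of your own making, and your proposed resolution is too vague to assess. The paper sidesteps it entirely. For the approximate version (\Cref{theorem:main2}/\Cref{theorem:main4}), the only new ingredient on top of the exact argument is an exponential-basis trick: one passes to $f'=e^{if},\,h'=e^{-ih}$, uses the elementary bounds $1-\tfrac{z^2}{2}\le\cos z\le 1-\tfrac{z^2}{3}$ on $[-1,1]$ (\Cref{claim:conversion}) to translate the $\ell_2$-error hypothesis into $\E[\operatorname{Re}(h'(\x+\sum\y_i)f'(\x))]\ge 1-\eps/2-2^{-\Omega(N)}$, applies the same Fourier/mixing claim to insert $\v\in H$ at cost $2^{-\Omega(N)}$, and then converts back to an $\ell_2$ bound for a $k$-linear-junta $g$. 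No invariance of $f$ is ever asserted, and no triangle-inequality telescoping over generators occurs.
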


We develop the machinery needed to handle approximation in two steps. First, in \Cref{sec:F2_approx} we prove it for $p=2$, see in particular \Cref{theorem:main2}.
For general $p$ it is done in \Cref{sec:groups}, 
where the relevant theorem which implies
\Cref{thm:fp-approx} is  \Cref{theorem:main4}. 

\paragraph{Techniques.}

The result in \Cref{thm:f2} starts with a standard connection between streaming algorithms and a multi-party one-way communication game.
In this game there are $N$ players each holding an input in $\{0,1\}^n$.
These inputs are denoted as $x_1, \dots, x_N$, respectively.
The players communicate sequentially in $N$ rounds where in the $i$-th round the $i$-th player sends a message to the $(i + 1)$-th player.
The players have access to shared randomness and the $i$-th message can depend on the $(i-1)$-th message, the input $x_i$ and the shared randomness. 
The message sent in the final $N$-th round should be equal to $f(x_1 + \dots + x_N)$.
In fact, our proof works in a more general model where the $i$-th message can depend on all messages sent by previous players.
We refer to the above model as the \textit{one-way broadcasting communication model}.

Our technical result which 
implies \Cref{thm:f2}, \Cref{theorem:main}, shows that in any protocol which allows to compute $f$ at least one of the messages sent by the players has to be of size $\Omega(k)$ where $k$ is the smallest dimension of an $\ftwo$-sketch for $f$.
This immediately implies a space lower bound of $\Omega(k)$ for streaming algorithms in the XOR update model.
Indeed, if a streaming algorithm with smaller space existed then the players could just pass its state as their message after applying updates corresponding to their local inputs.

Our proof of the communication lower bound proceeds as follows. First, it will be easier to present the argument for $N+1$ players instead of $N$ players.
Assume that there exists a communication protocol which succeeds with probability $q$ and sends at most $c$ bits in every round. This protocol has to succeed for any distribution of the inputs. So, fix a ``hard" distribution $D$ over inputs $x \in \{0,1\}^n$. 

We sample the inputs $x_1,\ldots,x_{N+1} \in \{0,1\}^n$
to the $N$ players as follows: first, sample $x \sim D$. Then, sample $x_1,\ldots,x_N \in \{0,1\}^n$ uniformly. Finally, set $x_{N+1} = x + x_1 + \ldots + x_N$, so that the sum (modulo two) of the inputs to the $N+1$ players equals $x$. 

Next, an averaging argument then shows that there is a transcript (sequence of messages) $\pi=(m_1,\ldots,m_{N})$ of the first $N$ players such that:
\begin{enumerate}
\item[(i)] Conditioned on the transcript $\pi$, the protocol computes $f$ correctly with probability $\approx q$.
\item[(ii)] The probability for the 
transcript $\pi$ is not too tiny, concretely $\approx 2^{-cN}$.
\end{enumerate}

Once we fixed the transcript $\pi$, note that the output of the protocol depends only on the last input $x_{N+1}$, which we denote by $F(x_{N+1})$. Recall that by our construction, $x_{N+1} = x + x_1+\ldots+x_N$. Define sets $A_1,\ldots,A_N \subset \{0,1\}^n$ such that whenever $x_1 \in A_1,\ldots,x_N \in A_N$, the players send the transcript $\pi$. By (ii) above it holds that the density of a typical $A_i$ is approximately $2^{-c}$. Then, if we sample $y_i \in A_i$ uniformly then by (i) we have
$$
\Pr_{x \sim D, y_i \in A_i} \left[ F(x + y_1 + \ldots + y_N) = f(x) \right] \approx q.
$$

The next step is to apply Fourier analysis. In particular, we rely of Chang's lemma~\cite{chang2002polynomial}. 
This allows us to deduce that there exists a subspace $V \subset \ftwo^n$ of co-dimension $O(c)$ such that, if we sample in addition $v \in V$ uniformly, then
$$
\Pr_{x \sim D, y_i \in A_i, v \in V} \left[ F(x + y_1 + \ldots + y_N + v) = f(x) \right] \approx q.
$$
Concretely, $V$ is chosen to be orthogonal to the common large Fourier coefficients of the indicator functions of $A_1,\ldots,A_N$. In order for this to hold, it is necessary to choose $N$ large enough so that the sum $y_1 + \ldots + y_N$ ``mixes" enough in the group $\mathbb{F}_2^n$. It turns out that $N=\Omega(n)$ is sufficient for this.

This allows us to define a randomized $\mathbb{F}_2$-sketching protocol. Consider the quantity 
$$
g(x) = \Pr_{y_i \in A_i, v \in V} \left[ F(x + y_1 + \ldots + y_N + v) = 1 \right].
$$
The function $g(x)$ depends only on the coset $x+V$, and hence can be computed by a randomized $\mathbb{F}_2$-sketching protocol with complexity equals to the co-dimension of $V$.

The results for updates modulo $p$ (\Cref{thm:fp}) and approximation (\Cref{thm:fp-approx}) follow the same general scheme, except that now players are holding inputs in $\mathbb Z_p^n$ and we convert any $c$-bit protocol with small  error into a sketch modulo $p$ of dimension $O(c)$ which has a similar error. 
In order to achieve mixing in this setting the required number of players is $N=\Omega(n \log p)$.

\paragraph{Distributed computing in the simultaneous communication model.}
Our results imply that lower bounds on linear sketches modulo $p$ immediately lead to lower bounds for computing additive functions in the simultaneous communication complexity (SMP) model. In this model~\cite{BK97,BGKL03} there are $N$ players and a coordinator, who are all aware of a function $f \colon \mathbb{Z}_p^n \to [0,1]$. 
The players have inputs $x_1, \dots, x_N \in \mathbb{Z}_p^n$ and must send messages of minimal size to the coordinator so that the coordinator can compute $f(x_1, \dots, x_N)$ using shared randomness.
If $f$ is additive, i.e. of the form $f(x_1 + \dots + x_N)$ then this is strictly harder than the one-way broadcasting model described above.
Note that dimension of the best linear sketch modulo $p$ for $f$ still translates to a protocol for the SMP model.


\paragraph{Previous work.}
Most closely related to ours are results of~\cite{LNW14} and~\cite{AHLW16} which stemmed from the work of~\cite{G08}.
In particular~\cite{LNW14} shows that under various assumptions about the updates turnstile streaming algorithms can be turned into linear sketches over integers with only a $O(\log p)$ multiplicative loss in space.
While this is similar to our results, these approaches inherently require extremely long streams of adversarial updates (of length triply exponential in $n$ in~\cite{LNW14}) as they essentially aim to fail any small space algorithm (modeled as a finite state automaton) using a certain sequence of updates.  
Furthermore, the results of~\cite{LNW14} rely on a certain ``box constraint'' requirement.
This requirement says that correctness of the streaming algorithm should be guaranteed for the resulting input $x \in \{-m, \dots, m\}^n$ even if the intermediate values of the coordinates of $x$ throughout the stream are allowed to be much larger than $m$ in absolute value. While this requirement has been subsequently removed in~\cite{AHLW16}, their results again impose a certain constraint on the class of streaming algorithms they are applicable to.
In particular, their Theorem 3.4 which removes the ``box constraint'' is only applicable to algorithms which use space at most $\frac{c \log m}{n}$ which is only non-trivial for $m = \Omega(2^n)$.

It has been open since the work of~\cite{LNW14} and~\cite{AHLW16} whether similar results can be obtained using the tools from communication complexity, as has been the case for most other streaming lower bounds.
While our results don't apply directly to updates over integers, a key component of the~\cite{LNW14,AHLW16} technique is to first reduce general automata to linear sketching modulo fixed integers. Hence our result can be seen as an alternative to their reduction which is specific to modular updates and is obtained through communication complexity tools.

Another related line of work is on communication protocols for XOR-functions~\cite{SZ08,MO09,TWXZ13,HHL16,Y16,KMSY18,YZ18}.
For inputs $x_1, \dots, x_N \in \mathbb F_2^n$ a multi-parity XOR-function is defined as $f(x_1 + \dots + x_N)$.
For the case of $p = 2$ our results are using one-way broadcasting communication complexity for the corresponding XOR-function of interest.
While the communication complexity of XOR-functions has been studied extensively in the two-party communication model, to the best of our knowledge prior to our work it hasn't been considered in the one-way multi-party setting.

\section{Sketching for  $f:\ftwo^n\rightarrow \{0,1\}$}
\label{sec:F2}

\subsection{Model}
We use regular letters $x,f$ for deterministic objects and bold letters $\x,\f$ for random variables. 

\paragraph{Streaming protocol.}
Let $F:(\ftwo^n)^N \to \{0,1\}$ be an $N$-player function, where the players' inputs are $x_1,\ldots,x_N \in \ftwo^n$. We assume that the players have access to shared randomness $\r \in \{0,1\}^r$.

A \emph{streaming protocol} for $F$ with $c$ bits of communication is defined as follows. The players send messages in order, where the $i$-th player's message $\m_i$ depends on her input $x_i$, the previous player's message $\m_{i-1}$ and the shared randomness $\r$. That is, 
\begin{align*}
&\m_1 = M_1(x_i, \r),\\
&\m_i = M_i(x_i, \m_{i-1}, \r), \qquad i=2,\ldots,N 
\end{align*}
where $M_i:\ftwo^n \times \{0,1\}^c \times \{0,1\}^r \to \{0,1\}^c$ for $1 \le i\leq N-1$ and 
$M_N:\ftwo^n \times \{0,1\}^c \times \{0,1\}^r \to \{0,1\}$, where the output of the protocol is the last message sent $\m_N \in \{0,1\}$. We may write it as $\m_N = G(x_1,\ldots,x_N,\r)$, where $G$ respects the protocol structure:
$$
G(x_1,\ldots,x_N,\r) = M_n(\ldots,M_2(x_2, M_1(x_1, \r), \r),\r).
$$
The protocol computes $F$ correctly with probability $q$, if for all possible inputs $x_1,\ldots,x_N \in \ftwo^n$, it holds that
$$
\Pr \left[G(x_1,\ldots,x_N,\r) = F(x_1,\ldots,x_N)\right] \ge q.
$$

\paragraph{One-way broadcasting.}
A one-way broadcasting protocol is a generalization of a streaming protocol. We introduce this model as our simulation theorem extends to this model seamlessly.

In this model, the message sent by the $i$-th player is seen by all the players coming after her. Equivalently, the $i$-th player's message may depend on $\m_1,\ldots,\m_{i-1}$, 
$$
\m_i = M_i(x_i, \m_1,\ldots,\m_{i-1}, \r).
$$
The notion of a protocol computing $F$ correctly with probability $q$ is defined analogously.

\paragraph{Linear sketches.}
A function $g:\ftwo^n \to \{0,1\}$ is a $k$-linear-junta if $g(x) = h(\ell_1(x),\ldots,\ell_k(x))$, where each $\ell_i:\ftwo^n \to \ftwo$ is a linear function,
and $h:\ftwo^k \to \{0,1\}$ is an arbitrary function. A linear sketch of cost $k$ is a distribution $\g:\ftwo^n \to \{0,1\}$ over $k$-linear-juntas. It computes $f$
with success probability $q$ if, for every input $x \in \ftwo^n$, it holds that
$$
\Pr\left[\g(x) = f(x)\right] \ge q.
$$

\paragraph{Simulation theorem.} Let $f:\ftwo^n \to \{0,1\}$ and $F:(\ftwo^n)^N \to \{0,1\}$ be the $N$-player function defined by
$$
F(x_1,\ldots,x_N) = f(x_1 + \ldots + x_N).
$$ 
We show that if there are sufficiently many players ($N$ is large enough) and $F$ has an efficient one-way broadcasting protocol, then $f$ also has an efficient linear sketch.

\begin{theorem}\label{theorem:main}
Let $N \ge 10n$ and suppose
 that $F$ has a one-way broadcasting protocol with $c$ bits of communication per message and success probability $q$.
Then there exists a linear sketch of cost $k$ which computes $f$ with success probability $q-2^{-\Omega(N)}$, where $k=O(c)$.
\end{theorem}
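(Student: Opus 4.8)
The plan is to take the given one-way broadcasting protocol for $F$, embed an arbitrary target $x\in\ftwo^n$ into it by padding with uniform dummy inputs, condition on the transcript of all but the last player, and then convert the resulting single-input function into a linear sketch by projecting away its ``mixed-out'' Fourier mass via Chang's lemma. Concretely, given $x$, draw $y_1,\dots,y_{N-1}\in\ftwo^n$ uniformly and independently, run the protocol on the inputs $y_1,\dots,y_{N-1},\,y_N:=x+y_1+\dots+y_{N-1}$, and read off its output. Since $y_1+\dots+y_N=x$ and the protocol is correct on every input tuple, for every $x$ the output equals $f(x)$ with probability $\ge q$ over $\r$ and $y_1,\dots,y_{N-1}$. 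Now fix $\r$ and the transcript $\pi=(m_1,\dots,m_{N-1})$ of the first $N-1$ players; note $\pi$ is a function of $\r,y_1,\dots,y_{N-1}$ only, not of $x$. Conditioned on $(\pi,\r)$ the $y_i$ ($i<N$) are independent and uniform on sets $A_i=A_i(\pi,\r)\subseteq\ftwo^n$ with $\prod_{i<N}(|A_i|/2^n)=\Pr[\pi\mid\r]$; the output of the protocol is $\tilde F_{\pi,\r}(x+s)$, where $\tilde F_{\pi,\r}:=M_N(\cdot,\pi,\r):\ftwo^n\to\{0,1\}$ and $s:=y_1+\dots+y_{N-1}$ is distributed as the convolution $\mu_\pi:=\mu_{A_1}*\dots*\mu_{A_{N-1}}$ of the uniform measures on the $A_i$.

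Next comes the Fourier step. For each $i<N$ let $S_i:=\{\xi:\ |\widehat{\mu_{A_i}}(\xi)|\ge \tfrac12\}$; Chang's lemma gives $\dim\mathrm{span}(S_i)=O(\log(2^n/|A_i|))$. Put $W_{\pi,\r}:=\mathrm{span}\{\xi:\ \xi\in S_i\ \text{for at least }(N-1)/2\ \text{indices }i\}$. Two properties hold. (Mixing) If $\xi\notin W_{\pi,\r}$ then $|\widehat{\mu_{A_i}}(\xi)|<\tfrac12$ for more than half the $i$, so $|\widehat{\mu_\pi}(\xi)|=\prod_{i<N}|\widehat{\mu_{A_i}}(\xi)|\le 2^{-(N-1)/2}$. (Dimension) Choosing a basis $\xi_1,\dots,\xi_r$ of $W_{\pi,\r}$ from within its defining set and double counting pairs $(i,j)$ with $\xi_j\in S_i$: for each $j$ there are at least $(N-1)/2$ such $i$, while for each $i$ the $\xi_j$ lying in $S_i$ form a linearly independent subset of $\mathrm{span}(S_i)$, hence number at most $\dim\mathrm{span}(S_i)$. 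Thus $r(N-1)/2\le\sum_{i<N}\dim\mathrm{span}(S_i)=O(\log(1/\Pr[\pi\mid\r]))$, i.e. $\dim W_{\pi,\r}=O\!\big(\log(1/\Pr[\pi\mid\r])/N\big)$.

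To control this, observe that $\log(1/\Pr[\pi\mid\r])$ is the sum over $i<N$ of the conditional surprises $\log(1/\Pr[m_i\mid m_{<i},\r])$, each with conditional mean at most $c$ (entropy of a $c$-bit message) and an exponential upper tail past $c$ (at most $2^c$ possible messages); a standard Chernoff/AEP estimate then gives $\log(1/\Pr[\pi\mid\r])\le O((N-1)c)$ except with probability $2^{-\Omega(N)}$ over $(\r,y_1,\dots,y_{N-1})$, so on that event $\dim W_{\pi,\r}\le K$ for a suitable $K=O(c)$. Now define the sketch: on input $x$, draw $(\r,y_1,\dots,y_{N-1})$ and an independent uniform $u\in[0,1]$; if $\dim W_{\pi,\r}>K$ output $0$ (a trivial $K$-junta); otherwise output $\mathbf{1}[\,u\le \bar g_{\pi,\r}(x)\,]$, where $\bar g_{\pi,\r}:=\tilde F_{\pi,\r}*\mu_\pi*\mathrm{unif}(W_{\pi,\r}^{\perp})$. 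Since $\widehat{\bar g_{\pi,\r}}$ is supported on $W_{\pi,\r}$, the value $\bar g_{\pi,\r}(x)$ depends on $x$ only through the coset $x+W_{\pi,\r}^{\perp}$, i.e. through at most $K$ linear forms on $x$; so this is a linear sketch of cost $K=O(c)$.

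Finally, correctness. Writing $g_{\pi,\r}:=\tilde F_{\pi,\r}*\mu_\pi$, the functions $g_{\pi,\r}$ and $\bar g_{\pi,\r}$ share all Fourier coefficients inside $W_{\pi,\r}$, while off $W_{\pi,\r}$ the mixing bound together with $\sum_\xi|\widehat{\tilde F_{\pi,\r}}(\xi)|\le 2^{n/2}$ (Cauchy--Schwarz and Parseval) yields $\|g_{\pi,\r}-\bar g_{\pi,\r}\|_\infty\le 2^{-(N-1)/2}\cdot 2^{n/2}=2^{-\Omega(N)}$, using $N\ge 10n$. Hence for every $x$, $\Pr_u[\mathbf{1}[u\le\bar g_{\pi,\r}(x)]=f(x)]$ is within $2^{-\Omega(N)}$ of $\Pr_{s\sim\mu_\pi}[\tilde F_{\pi,\r}(x+s)=f(x)]$ pointwise; averaging over $(\pi,\r)$, the latter averages exactly to the protocol's success probability $\ge q$, and charging $2^{-\Omega(N)}$ for the rare event $\dim W_{\pi,\r}>K$, the sketch outputs $f(x)$ with probability at least $q-2^{-\Omega(N)}$ for every $x$. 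The step I expect to be the crux is the dimension bound: applying Chang to each $A_i$ in isolation, or union-bounding over transcripts, would each cost a factor of $N$ and only give a sketch of cost $O(cN)$; the fix is the amortized estimate $\dim W_{\pi,\r}=O(\log(1/\Pr[\pi\mid\r])/N)$ combined with the concentration of $\log(1/\Pr[\pi\mid\r])$ around $(N-1)c$, while also arranging that the three sources of error — the bad-transcript event, the smoothing error between $g_{\pi,\r}$ and $\bar g_{\pi,\r}$, and the Fourier tail past $W_{\pi,\r}$ — are simultaneously $2^{-\Omega(N)}$.
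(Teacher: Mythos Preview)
Your proof is correct and follows the same high-level architecture as the paper's (embed $x$ via uniform dummy inputs, condition on the transcript of the first $N-1$ players, apply Chang's lemma to bound the dimension of the ``large'' spectrum of the sets $A_i$, and smooth by the orthogonal subspace), but the execution differs in two genuine ways.

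First, the paper invokes Yao's minimax: it fixes a hard distribution $D$ on $\ftwo^n$, then fixes the shared randomness and selects a \emph{single} transcript $\pi$ with $\Pr[\pi]\ge 2^{-(c+1)N}$ and conditional success $\ge q-2^{-N}$, and builds one deterministic $k$-linear-junta that is good against $D$. You instead keep the protocol's randomness live and average over \emph{all} transcripts, using the elementary counting bound $\Pr[\Pr[\pi\mid\r]<2^{-(c+1)(N-1)}]\le 2^{c(N-1)}\cdot 2^{-(c+1)(N-1)}=2^{-(N-1)}$ to control the rare event where the subspace $W_{\pi,\r}$ is too large. This yields a randomized sketch directly, bypassing Yao.

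Second, the dimension bound is organized differently. The paper first restricts to the index set $B=\{i:\alpha_i\ge 2^{-2(c+1)}\}$ (so that Chang's bound for each $A_i$ is uniformly $O(c)$), defines $S=\{\gamma:\sum_{i\in B}|\widehat{\varphi_{A_i}}(\gamma)|^2\ge|B|/2\}$, and double-counts $\sum_{i\in B}\sum_j|\widehat{\varphi_{A_i}}(\gamma_j)|^2$ to get $\dim\mathrm{span}(S)=O(c)$. You instead apply Chang to \emph{each} $A_i$ to get $\dim\mathrm{span}(S_i)=O(\log(1/\alpha_i))$, and your double-count over pairs $(i,j)$ with $\xi_j\in S_i$ gives the amortized estimate $\dim W_{\pi,\r}=O\big(\sum_i\log(1/\alpha_i)/N\big)=O\big(\log(1/\Pr[\pi\mid\r])/N\big)$, which on the typical event is $O(c)$. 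This is a clean variant: it avoids the auxiliary set $B$ at the cost of needing the tail bound on $\log(1/\Pr[\pi\mid\r])$. Both routes ultimately rely on the same two ingredients---Chang's lemma and the convolution mixing bound $|\widehat{\mu_\pi}(\xi)|\le 2^{-\Omega(N)}$ off $W$---and arrive at the same conclusion with the same parameters.
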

\begin{remark}
We remark that the statement and proof of \Cref{theorem:main} generalizes straightforwardly to the case that $f:\F_p^n\rightarrow \{0,1\}$ for any prime $p$; the only difference is that we have to ensure that $N\geq 10n\log p$. We provide the proof over $\ftwo$ to slightly simplify the notation.  
\end{remark}

\subsection{Proof of \Cref{theorem:main}}
By Yao's minimax principle, it will suffice to show that 
for any distribution $D$ over $\ftwo^n$, 
there exists a $k$-linear-junta $g:\ftwo^n \to \{0,1\}$ such that
$$
\Pr_{\x \sim D}[g(\x) = f(\x)] \ge q-2^{-\Omega(N)}.
$$

Fix a distribution $D$.
We consider the following distribution over the inputs. It will be easier to assume we have $N+1$ players instead of $N$ players.
First, sample $\x_1,\ldots,\x_{N} \in \ftwo^n$ uniformly, and let $\x \sim D$. Set $\x_{N+1} = \x_1 + \ldots + \x_{N} + \x$.
Under this input distribution, there exists a fixed choice of the shared randomness which attains success probability $\ge q$. Namely, 
there is a fixed $r^*$ such that for $\m_1=M_1(\x_1,r^*), \m_2=M_2(\x_2,\m_1, r^*)$, etc, it holds that
\begin{equation}
\label{eq:prob}
\Pr_{\x_1,\ldots,\x_{N+1}} \left[G(\m_1,\ldots,\m_{N+1},r^*) = f(\x)\right] \ge q.
\end{equation}

Let $\p=(\m_1,\ldots,\m_{N}) \in \{0,1\}^{cN}$ denote the messages of the first $N$ players. 
For every possible value $\pi$ of $\p$, define 
$$
a(\pi) = \Pr[\p=\pi], \qquad b(\pi) = \Pr \left[G(\m_1,\ldots,\m_{N+1}, r^*) = f(\x) \; | \; \p=\pi\right].
$$
Then we may rewrite \Cref{eq:prob} as
\begin{equation*}
\label{eq:prob_ab}
\sum_{\pi} a(\pi) b(\pi) \ge q.
\end{equation*}
Let $\delta=2^{-N}$. By averaging, there exists a choice of $\pi=(m_1,\ldots,m_{N})$ such that
\begin{enumerate}
\item[(i)] $a(\pi) \ge \delta 2^{-cN} = 2^{-(c+1)N}$.
\item[(ii)] $b(\pi) \ge q-\delta$.
\end{enumerate}
Define sets $A_i=\{x_i \in \ftwo^n: M_i(x_i,m_1,\ldots,m_{i-1}, r^*)=m_i\}$ for $i=1,\ldots,N$ so that
$$
\left[ \p = \pi \right] \quad \Leftrightarrow \quad \left[ \x_1 \in A_1,\ldots,\x_{N} \in A_{N} \right].
$$
Let $\alpha_i = \frac{|A_i|}{2^n}$ denote the density of $A_i$. Condition (i) translates to
\begin{equation}
\label{eq:a}
\prod_{i=1}^{N} \alpha_i = a(\pi) \ge 2^{-(c+1)N}.
\end{equation}

Next, conditioned on $\p=\pi$, the one-way broadcasting protocol simplifies. First, define $h:\ftwo^n \to \{0,1\}$ as
$$
h(x_{N+1}) = G(m_1,\ldots,m_{N}, M_{N+1}(x_{N+1},m_1,\ldots,m_N,r^*)).
$$
Let $\y_1 \in A_1,\ldots,\y_N \in A_N$ be uniformly and independently chosen. Condition (ii) translates to
\begin{equation}
\label{eq:h}
\Pr\left[h(\x+\y_1+\ldots+\y_N) = f(\x)\right] = b(\pi) \ge q-\delta.
\end{equation}

We next apply Fourier analysis. Let us quickly set some common notation in the following paragraph.
\paragraph{Fourier analysis.}
Let $f:\ftwo^n\rightarrow \R $ be  be a function. Then given $\gamma\in \ftwo^n$, the Fourier coefficient $\widehat{f}(\gamma)$ is define as $\widehat{f}(\gamma) = \E_{x\in \ftwo^n} f(x) (-1)^{\langle x,\gamma\rangle}$. The function $f$ can be expressed in the Fourier basis as $f(x) = \sum_{\gamma\in \ftwo^n} \widehat{f}(\gamma) (-1)^{\langle x,\gamma\rangle}$. Given two functions $f,g:\ftwo^n\rightarrow \R$, their convolution  $f*g:\ftwo^n\rightarrow \R$, is defined by $f*g(x) = \E_{y\in \ftwo^n} f(y)g(x+y)$. Moreover, we have the equality $\widehat {f*g}(\gamma) =  \widehat{f}(\gamma) \widehat{g}(\gamma)$ for all $\gamma \in \ftwo^n$. Given a set $A \subset \ftwo^n$ of density $\alpha=\frac{|A|}{2^n}$, define its normalized indicator function $\varphi_A:\ftwo^n \to \R$ as
$$
\varphi_A(x) = 
\begin{cases}
1/\alpha & \text{if } x \in A\\
0 & \text{otherwise}
\end{cases}\;.
$$
Note that under this normalization, $\widehat{\varphi_A}(0) = \E[\varphi_A]=1$ and $|\widehat{\varphi_A}(\gamma)| \le 1$ for all $\gamma \in \F_p^n$.

Going back to the proof, for technical reasons we switch to the $\{-1,1\}$  instead of the $\{0,1\}$. Define the functions $h' = (-1)^h$ and $f' = (-1)^f$. We work with $h',f'$ instead. Note that \Cref{eq:h} translates to  
\begin{equation}
\label{eq:hh}
\Pr\left[h'(\x+\y_1+\ldots+\y_N)  f'(\x)=1\right] = b(\pi) \ge q-\delta.
\end{equation}
 We use the following immediate consequence of Chang's lemma~\cite{chang2002polynomial}.
\begin{lemma}
\label{lemma:chang}
Let $A \subset \ftwo^n$ of density $\alpha = \frac{|A|}{2^n}$. Let $\gamma_1,\ldots,\gamma_k \in \ftwo^n$ be linearly independent. Then
$$
\sum_{i=1}^k |\widehat{\varphi_A}(\gamma_i)|^2 \le 8\log 1/\alpha.
$$
\end{lemma}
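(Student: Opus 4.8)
The plan is to give the standard moment-generating-function argument that underlies Chang's lemma, specialized to this ``energy'' formulation. Write $b_i := \widehat{\varphi_A}(\gamma_i)$; unwinding the definition of $\varphi_A$ gives $b_i = \E_{x}\!\left[(-1)^{\langle x,\gamma_i\rangle}\right]$ where $x$ is uniform on $A$, and since we are over $\ftwo$ each $b_i$ is a real number in $[-1,1]$. Set $S := \sum_{i=1}^k b_i^2$, the quantity to be bounded, and introduce the auxiliary function $Y:\ftwo^n \to \R$, $Y(x) := \sum_{i=1}^k b_i (-1)^{\langle x,\gamma_i\rangle}$. By linearity of expectation, $\E_{x \in A}[Y(x)] = \sum_i b_i^2 = S$, with $x$ uniform on $A$.

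The crux --- and the only place linear independence of the $\gamma_i$ is used --- is that when $x$ is uniform on all of $\ftwo^n$, the bits $\langle x,\gamma_1\rangle,\ldots,\langle x,\gamma_k\rangle$ are independent and uniform in $\ftwo$, because the linear map $x \mapsto (\langle x,\gamma_i\rangle)_{i=1}^k$ is surjective onto $\ftwo^k$. Hence under the uniform distribution on $\ftwo^n$, $Y$ is distributed as a Rademacher sum $\sum_i b_i \eps_i$ with $\eps_1,\ldots,\eps_k$ independent uniform $\pm 1$, so for every $\lambda \in \R$,
$$\E_{x \in \ftwo^n}\!\left[e^{\lambda Y(x)}\right] \;=\; \prod_{i=1}^k \cosh(\lambda b_i) \;\le\; \prod_{i=1}^k e^{\lambda^2 b_i^2/2} \;=\; e^{\lambda^2 S/2}.$$

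Next I would transfer this bound to the distribution on $A$ at the cost of a factor $1/\alpha$: since $e^{\lambda Y(x)} > 0$ for all $x$ and $|A| = \alpha 2^n$,
$$\E_{x \in A}\!\left[e^{\lambda Y(x)}\right] \;=\; \frac{1}{|A|}\sum_{x \in A} e^{\lambda Y(x)} \;\le\; \frac{2^n}{|A|}\,\E_{x \in \ftwo^n}\!\left[e^{\lambda Y(x)}\right] \;\le\; \frac{1}{\alpha}\,e^{\lambda^2 S/2}.$$
Combining with Jensen's inequality applied to the convex map $t \mapsto e^{\lambda t}$ gives $e^{\lambda S} = e^{\lambda \E_{x\in A}[Y]} \le \E_{x\in A}[e^{\lambda Y}] \le \frac{1}{\alpha} e^{\lambda^2 S/2}$. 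Taking logarithms and setting $\lambda = 1$ yields $S \le \ln(1/\alpha) + S/2$, i.e.\ $S \le 2\ln(1/\alpha)$, which implies the claimed bound $S \le 8\log(1/\alpha)$ whether $\log$ denotes the natural or the binary logarithm. (If $A$ is empty there is nothing to prove; note $\alpha \le 1$, so the right-hand side is nonnegative.)

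I do not expect a serious obstacle here: the one genuine idea is the independence of the characters $(-1)^{\langle x,\gamma_i\rangle}$ coming from linear independence of the $\gamma_i$, and the rest is a textbook sub-Gaussian tail computation (this is essentially the standard proof of Chang's lemma). The only points requiring a little care are the direction of the comparison between $\E_{x\in A}[e^{\lambda Y}]$ and $\E_{x\in\ftwo^n}[e^{\lambda Y}]$ (which is why one uses positivity of the exponential together with $|A| = \alpha 2^n$), and keeping sign conventions consistent --- over $\ftwo$ the Fourier coefficients are real, so the $b_i$ already carry their own signs and no absolute values or complex conjugates are needed.
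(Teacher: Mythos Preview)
Your argument is correct and is essentially the standard entropy/moment-generating-function proof of Chang's lemma: identify $\widehat{\varphi_A}(\gamma_i)$ with the average of the Rademacher variable $(-1)^{\langle x,\gamma_i\rangle}$ over $A$, use linear independence to get true independence of these characters under the uniform measure on $\ftwo^n$, bound the MGF by $e^{\lambda^2 S/2}$, pay the density factor $1/\alpha$ to pass to the uniform measure on $A$, and finish with Jensen. Every step checks out, including the constants (you get $S \le 2\ln(1/\alpha)$, comfortably below $8\log(1/\alpha)$ in either base).

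The paper itself does not prove this lemma; it simply quotes it as an immediate consequence of Chang's lemma~\cite{chang2002polynomial} and uses it as a black box. So there is nothing to compare against --- your write-up supplies a self-contained proof where the paper is content with a citation.
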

Let $S \subset \ftwo^n$ be a set of ``noticeable" Fourier coefficients of $A_1,\ldots,A_N$, defined as follows. 
First, define 
$$
B=\left \{i \in [N]: \frac{|A_i|}{2^n} \ge 2^{-2(c+1)} \right \}.
$$
\Cref{eq:a} implies that $|B| \ge N/2$. Then, define 
$$
S =\left\{\gamma \in \ftwo^n: \sum_{i \in B} |\varphi_{A_i}(\gamma)|^2 \ge |B|/2\right\}.
$$
Let $\gamma_1,\ldots,\gamma_k$ be a maximal set of linearly independent elements in $S$. \Cref{lemma:chang} implies that $k=O(c)$. Thus, there exists a subspace $U \subset \ftwo^n$ of dimension $k$ such that $S \subset U$.
Let $V \subset \ftwo^n$ be the orthogonal subspace to $U$. 

\begin{claim}\label{claim:fourier}
Let $\y_1 \in A_1,\ldots, \y_N \in A_N, \v \in V$ be chosen uniformly and independently. Then for every $x \in \ftwo^n$ it holds that
$$
\left| \E[h'(x+\y_1+\ldots+\y_N)] - \E[h'(x+\y_1+\ldots+\y_N+\v)] \right| \le 2^{n}2^{-N/8}.
$$
\end{claim}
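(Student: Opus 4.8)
The plan is to expand both expectations in the Fourier basis over $\ftwo^n$ and to observe that they differ only in the Fourier modes $\gamma \notin U$, which are exactly the modes that are ``small'' for the majority of the sets $A_i$ (the indices in $B$), after which a single application of AM--GM turns the sum-of-squares bound defining $S$ into an exponentially small product bound.

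First I would record two elementary Fourier computations. For $\y_i$ uniform in $A_i$ we have $\E_{\y_i \in A_i}[(-1)^{\langle \y_i,\gamma\rangle}] = \widehat{\varphi_{A_i}}(\gamma)$ (this is just the paper's normalization of $\varphi_{A_i}$), and for $\v$ uniform in $V = U^{\perp}$ we have $\E_{\v \in V}[(-1)^{\langle \v,\gamma\rangle}] = \mathbf 1[\gamma \in U]$, since the character $v \mapsto (-1)^{\langle v,\gamma\rangle}$ is trivial on $V$ precisely when $\gamma \in V^{\perp} = U$. Plugging the expansion $h'(z) = \sum_{\gamma} \widehat{h'}(\gamma)(-1)^{\langle z,\gamma\rangle}$ into each expectation and using independence of the $\y_i$ and $\v$ gives
$$
\E[h'(x+\y_1+\ldots+\y_N)] = \sum_{\gamma \in \ftwo^n} \widehat{h'}(\gamma)\,(-1)^{\langle x,\gamma\rangle}\prod_{i=1}^N \widehat{\varphi_{A_i}}(\gamma),
$$
while the analogue with $\v$ added is the same sum restricted to $\gamma \in U$. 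Subtracting, the difference in the claim equals $\sum_{\gamma \notin U} \widehat{h'}(\gamma)\,(-1)^{\langle x,\gamma\rangle}\prod_{i=1}^N \widehat{\varphi_{A_i}}(\gamma)$, and by the triangle inequality its absolute value is at most $\sum_{\gamma \notin U} |\widehat{h'}(\gamma)| \prod_{i=1}^N |\widehat{\varphi_{A_i}}(\gamma)|$.

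Next I would bound each surviving term. Since $h'$ is $\{-1,1\}$-valued, $|\widehat{h'}(\gamma)| \le 1$, and $|\widehat{\varphi_{A_i}}(\gamma)| \le 1$ for all $i$ and $\gamma$. Fix $\gamma \notin U$; because $S \subseteq U$ we have $\gamma \notin S$, so by definition of $S$, $\sum_{i \in B} |\widehat{\varphi_{A_i}}(\gamma)|^2 < |B|/2$. Applying AM--GM to the numbers $t_i = |\widehat{\varphi_{A_i}}(\gamma)|^2 \in [0,1]$ for $i \in B$ gives $\bigl(\prod_{i \in B} t_i\bigr)^{1/|B|} \le \frac{1}{|B|}\sum_{i \in B} t_i < \frac12$, hence $\prod_{i \in B} |\widehat{\varphi_{A_i}}(\gamma)| < 2^{-|B|/2}$; bounding the remaining factors $i \notin B$ by $1$ and using $|B| \ge N/2$ yields $\prod_{i=1}^N |\widehat{\varphi_{A_i}}(\gamma)| < 2^{-N/4}$ for every $\gamma \notin U$. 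Summing over the at most $2^n$ choices of $\gamma$ gives the bound $2^n \cdot 2^{-N/4} \le 2^n 2^{-N/8}$, which is the claim.

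I do not expect a genuine obstacle: the only points needing care are (a) correctly identifying which Fourier modes cancel in the subtraction — the modes in $U$ vanish because averaging over $V = U^{\perp}$ kills exactly the characters that are non-trivial on $V$, and since $S \subseteq U$ the surviving modes lie outside $S$; and (b) checking the normalization conventions so that $\E_{\y \in A_i}[(-1)^{\langle \y,\gamma\rangle}]$ really is $\widehat{\varphi_{A_i}}(\gamma)$ in the paper's definition of the Fourier transform. Everything else is the one-line AM--GM trick converting $\sum t_i < |B|/2$ into $\prod t_i < 2^{-|B|}$.
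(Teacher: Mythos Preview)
Your proof is correct and follows essentially the same route as the paper: expand both expectations in the Fourier basis, note that averaging over $\v \in V = U^\perp$ kills exactly the characters $\gamma \in U$, and bound the remaining sum over $\gamma \notin U \supseteq S$ termwise. The only difference is in the final estimate of $\prod_{i \in B}|\widehat{\varphi_{A_i}}(\gamma)|$: the paper argues via a pigeonhole-type count that many factors are at most $1/\sqrt{2}$, whereas your AM--GM step is cleaner and in fact yields the slightly stronger bound $2^n 2^{-N/4}$.
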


\begin{proof}
We rewrite both expressions using their Fourier expansion:
\begin{align*}
\E[h'(x + \y_1 + \dots \y_N)] = \varphi_{A_1} * \dots * \varphi_{A_N} * h' (x) = \sum_{\gamma \in \ftwo^n} (-1)^{\langle \gamma, x \rangle} \widehat {h'}(\gamma) \prod_{i = 1}^N \widehat{\varphi_{A_i}}(\gamma)
\end{align*}
and similarly
\begin{align*}
\E[h'(x + \y_1 + \dots \y_N + \v)] = \sum_{\gamma \in \ftwo^n}(-1)^{\langle \gamma, x \rangle} \widehat {h'}(\gamma) \widehat{\varphi_V}(\gamma) \prod_{i = 1}^N \widehat{\varphi_{A_i}}(\gamma).
\end{align*}
As $V$ is a subspace, we have $\widehat{\varphi_V}(\gamma) = 1_{U}(\gamma)$. We can thus bound
$$
\left| \E[h'(x+\y_1+\ldots+\y_N)] - \E[h'(x+\y_1+\ldots+\y_N+\v)] \right| \le \sum_{\gamma \notin U} \prod_{i=1}^N \left| \widehat{\varphi_{A_i}}(\gamma) \right|.
$$
If $\gamma \in S$ then also $\gamma \in U$. Otherwise, by the construction of $S$, $|\widehat{\varphi_{A_i}}(\gamma)|^2 \le 1/2$ for at least $|B|/2$ elements $i \in [N]$. We thus have
$$
\left| \E[h'(x+\y_1+\ldots+\y_N)] - \E[h'(x+\y_1+\ldots+\y_N+\v)] \right| \le 2^{n}2^{- |B|/4} \le 2^{n} 2^{- N/8}.
$$
\end{proof}
 Moreover, using the assumption $N \ge 10n $ provides
$$
\left| \E[h'(x+\y_1+\ldots+\y_N)] - \E[h'(x+\y_1+\ldots+\y_N+\v)] \right| \le 2^{-\Omega(N)}.
$$
Now, by \Cref{eq:hh} we already have 
$$
 \E[h'(\x+\y_1+\ldots+\y_N)f'(\x)]\geq 2q-1-2^{-\Omega(N)}
$$
allowing us to obtain
$$
 \E[h'(\x+\y_1+\ldots+\y_N+\v)f'(\x)]\geq 2q-1-2^{-\Omega(N)}
$$
implying 
$$
\Pr \left[h(\x+\y_1+\ldots+\y_N+\v) = f(\x)\right] \geq q-2^{-\Omega(N)}
$$
To conclude the proof, define the function $w:\ftwo^n \to [0,1]$ as
$$
w(x) = \E\left[ h(x+\y_1+\ldots+\y_N+\v) \right].
$$
Note that $w(x) = W(\ell_1(x),\ldots,\ell_k(x))$, where $\ell_1,\ldots,\ell_k$ are a basis for $U$, and $W:\ftwo^k \to [0,1]$. 
Define a randomized function $\G:\ftwo^k \to \{0,1\}$, where $\Pr[\G(z)=1]=W(z)$ independently for each $z \in \ftwo^k$.
Define $\g(x) = \G(\ell_1(x),\ldots,\ell_k(x))$, which is a randomized $k$-linear-junta and observe that $\g(x)$ and $h(x+\y_1+\cdots+\y_N+\v)$ have the same distribution for every $x\in \ftwo^n$.
Therefore, we have that
\begin{align*}
\Pr \left[\g(\x) = f(\x) \right] = \Pr \left[h(\x+\y_1+\ldots+\y_N+\v) = f(\x)\right] \geq q - 2^{-\Omega(N)},
\end{align*}

Finally, note that by an averaging argument, we can fix the internal randomness of $\g$ to obtain a $k$-linear-junta $g$ so that
$$
\Pr_{\x \sim D} \left[g(\x) = f(\x) \right] \ge q - 2^{-\Omega(N)}.
$$
This concludes the proof.

\section{Sketching for  $f:\ftwo^n\rightarrow [0,1]$}
\label{sec:F2_approx}
In this section, we approximate a given function $f:\ftwo^n\rightarrow [0,1]$ with an additive error. Both the model and the proof are similar to the previous case.
\subsection{Model} 
\paragraph{Protocols.}
The notions of \emph{streaming protocol} and \emph{one-way broadcasting protocol} with $c$ bits of communication are defined similar as before, where the only difference is that the last message $\m_N$ takes values in $[0,1]$ instead of $\{0,1\}$.

The protocol $G$ is said to \emph{$\eps$-approximate}  $F:(\ftwo^n)^N \to [0,1]$ if for all possible inputs $x_1,\ldots,x_N \in \ftwo^n$, it holds that
$$
\E_{\r} \left[\left|G(x_1,\ldots,x_N,\r) - F(x_1,\ldots,x_N)\right|^2\right] \leq \eps .
$$

\paragraph{Linear sketches.}
 A linear sketch of cost $k$ is a distribution $\g:\ftwo^n \to [0,1]$ over $k$-linear-juntas, where a $k$-linear-junta $g:\ftwo^n \to [0,1]$ is defined as before. The linear sketch $\g$ \emph{$\eps$-approximates} $f:\ftwo^n \to [0,1]$
if, for every  $x \in \ftwo^n$, it holds that
$$
\E\left[\left|\g(x) - f(x) \right|^2\right] \leq \eps.
$$

We prove the following theorem in the rest of the section.
\begin{theorem}\label{theorem:main2}
Let $f:\ftwo^n \to [0,1]$ and assume $N \ge 10n  $ and $F:(\ftwo^n)^N \to [0,1]$  is defined by
$
F(x_1,\ldots,x_N) = f(x_1 + \ldots + x_N)
$. Suppose that $F$ is $\eps$-approximated by a one-way broadcasting protocol with $c$ bits of communication per message. Then there is a linear sketch $\g:\ftwo^n\rightarrow[0,1]$ of cost $k$ that $\eps'$-approximates $f$, where $k=O(c)$ and $\eps' = 2\eps+2^{-\Omega(N)}$. 
\end{theorem}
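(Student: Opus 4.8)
The plan is to repeat the proof of \Cref{theorem:main} almost verbatim, with two modifications tailored to the $\ell_2$-error regime: the transcript-averaging step becomes a Markov-type argument, and convexity of $t\mapsto t^2$ is used to absorb the players' randomness. As in \Cref{theorem:main}, by Yao's minimax principle it suffices to fix a distribution $D$ over $\ftwo^n$ and produce a single $k$-linear-junta $g:\ftwo^n\to[0,1]$, with $k=O(c)$, such that $\E_{\x\sim D}\big[|g(\x)-f(\x)|^2\big]\le\eps'$. Introduce an auxiliary $(N{+}1)$-st player and sample $\x_1,\dots,\x_N\in\ftwo^n$ uniformly, $\x\sim D$, and $\x_{N+1}=\x_1+\dots+\x_N+\x$; averaging over the shared randomness fixes $r^*$ with $\E\big[|G(\x_1,\dots,\x_{N+1},r^*)-f(\x)|^2\big]\le\eps$.

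For the transcript step, write $\p=(\m_1,\dots,\m_N)$ and set $a(\pi)=\Pr[\p=\pi]$, $b(\pi)=\E\big[|G(\cdot,r^*)-f(\x)|^2\mid \p=\pi\big]$, so $\sum_\pi a(\pi)b(\pi)\le\eps$. The transcripts $\pi$ with $a(\pi)<\tfrac12 2^{-cN}$ together carry at most half the total probability mass, so there is a transcript $\pi=(m_1,\dots,m_N)$ with $a(\pi)\ge 2^{-cN-1}$ and $b(\pi)\le 2\eps$. From here everything is as in \Cref{theorem:main}: the sets $A_i=\{x_i:M_i(x_i,m_1,\dots,m_{i-1},r^*)=m_i\}$ satisfy $\prod_i\alpha_i=a(\pi)\ge 2^{-cN-1}$, whence $B=\{i:\alpha_i\ge 2^{-2(c+1)}\}$ has $|B|\ge N/2$; and the residual function $h(x_{N+1})=G(m_1,\dots,m_N,M_{N+1}(x_{N+1},m_1,\dots,m_N,r^*))$, now $[0,1]$-valued, satisfies $\E_{\x\sim D,\ \y_1\in A_1,\dots,\y_N\in A_N}\big[|h(\x+\y_1+\dots+\y_N)-f(\x)|^2\big]=b(\pi)\le 2\eps$.

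Now I would feed this through the Fourier analysis of \Cref{theorem:main} unchanged. Define the set $S$ of noticeable Fourier coefficients of $\{A_i\}_{i\in B}$, a subspace $U\supseteq S$ of dimension $k=O(c)$ (via \Cref{lemma:chang}), and $V=U^\perp$. The proof of \Cref{claim:fourier} uses only that the residual function has Fourier coefficients bounded by $1$ in absolute value, which holds for any $h:\ftwo^n\to[0,1]$ since $|\widehat h(\gamma)|\le\E_x|h(x)|\le 1$; hence, writing $u(x)=\E_{\y_i}[h(x+\y_1+\dots+\y_N)]$ and $w(x)=\E_{\y_i,\v}[h(x+\y_1+\dots+\y_N+\v)]$ for uniform $\y_i\in A_i$ and $\v\in V$, we get $|w(x)-u(x)|\le 2^n2^{-N/8}\le 2^{-\Omega(N)}$ for every $x$, using $N\ge 10n$. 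Two applications of convexity then finish: by Jensen over the $\y_i$, $\E_{\x\sim D}[|u(\x)-f(\x)|^2]\le \E_{\x\sim D,\y_i}[|h(\x+\y_1+\dots+\y_N)-f(\x)|^2]\le 2\eps$, and then the triangle inequality together with $u,w,f\in[0,1]$ gives $\E_{\x\sim D}[|w(\x)-f(\x)|^2]\le 2\eps+2^{-\Omega(N)}=\eps'$. Since $w$ depends on $x$ only through the coset $x+V$, we have $w(x)=W(\ell_1(x),\dots,\ell_k(x))$ for a basis $\ell_1,\dots,\ell_k$ of $U$ and some $W:\ftwo^k\to[0,1]$, so $w$ is itself the desired $[0,1]$-valued $k$-linear-junta; no rounding step is needed because $[0,1]$-valued juntas are permitted here.

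The only genuinely new point, and the place to be careful, is the transcript-averaging step: one must choose the density cutoff so that the surviving transcripts still carry a constant fraction of the probability mass while a typical $A_i$ keeps density $\approx 2^{-c}$ (so that $\dim U=O(c)$), and this trade-off is exactly what forces the factor $2$ in $\eps'=2\eps+2^{-\Omega(N)}$. Everything else is a transcription of the proof of \Cref{theorem:main}, with convexity of $t\mapsto t^2$ playing the role that exact case analysis played there — and, as a bonus, removing the final randomized-rounding step.
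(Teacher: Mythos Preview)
Your proof is correct, and it takes a cleaner route than the paper's. The paper passes to the exponential basis, setting $f'(x)=e^{if(x)}$ and $h'(x)=e^{-ih(x)}$, and uses a Taylor-expansion claim (\Cref{claim:conversion}) to translate between $\E[|h(\cdot)-f(\cdot)|^2]$ and the correlation $\E[\operatorname{Re}(h'f')]$; it then applies \Cref{claim:fourier} to the unit-circle-valued $h'$ and converts back at the end. You instead apply \Cref{claim:fourier} directly to the $[0,1]$-valued $h$ (valid, as you note, since the proof only uses $|\widehat h(\gamma)|\le 1$), and then handle the $\ell_2$ error by Jensen over $\y_1,\ldots,\y_N$ followed by the pointwise bound $|u-w|\le 2^{-\Omega(N)}$. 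This avoids the exponential detour and \Cref{claim:conversion} entirely, and also eliminates the need for the randomized-junta step, since your $w$ is already a deterministic $[0,1]$-valued $k$-linear-junta. Your transcript-averaging is a slight variant of the paper's (you land at $b(\pi)\le 2\eps$ with $a(\pi)\ge 2^{-cN-1}$, whereas the paper gets $b(\pi)\le \eps+2^{-N}$ with $a(\pi)\ge 2^{-(c+1)N}$), but both feed identically into the rest of the argument and yield the same $\eps'=2\eps+2^{-\Omega(N)}$.
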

\begin{remark}
In this case as well, the proof directly generalizes to the case that $f:\F_p^n\rightarrow [0,1]$ conditioned on $N\geq 10n\log p$.
\end{remark}
\subsection{Proof of \Cref{theorem:main2}}
The proof is similar to the proof of \Cref{theorem:main}. We point out the necessary modifications. Same as before, we fix an arbitrary  distribution $D$ over $\ftwo^n$, 
and show that there exists a $k$-linear-junta $g:\ftwo^n \to [0,1]$ such that
$$
\E_{\x\sim D}\left[\left|g(\x) - f(\x) \right|^2\right] \leq \eps'.
$$
To do so, obtain a function $h:\ftwo^n\rightarrow [0,1]$ and sets $A_1,\cdots,A_N$ as before, so that for $\x\sim D, \y_i\in A_i$, we have
\begin{equation}
\label{eq:happrox}
\E \left[\left|h(\x+\y_1+\ldots+\y_N) - f(\x)\right|^2\right]\leq \eps +2^{-N}.
\end{equation}
Now we switch to the exponential basis. Define functions $f',h':\ftwo^n\rightarrow \mathbb{C}$ by 
$$f'(x) = e^{ i f(x)}, h'(x) = e^{- i h(x)}. $$
where $e= 2.71828\cdots$ is Euler's constant. Note that if $f(x) = h(x)$, then $\operatorname{Re}\left[f'(x)h'(x)\right]=1$, where $\operatorname{Re}[z]$ is the real component of $z$.  We need the following claim.
\begin{claim}\label{claim:conversion}
Let $\zz$ be a $[-1,1]$-valued random variable. Then, 
$$1-\frac{1}{2}\E\left[\zz^2\right]\leq\E \left[\operatorname{Re}[e^{i \zz}]\right]\leq 1-\frac{1}{3}\E\left[\zz^2\right].
$$  
\end{claim}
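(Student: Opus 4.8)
The plan is to reduce the claim to two elementary pointwise inequalities for the cosine function and then integrate. Since $\zz$ takes values in $[-1,1]\subset\R$, we have $\operatorname{Re}[e^{i\zz}] = \cos(\zz)$ pointwise, so $\E[\operatorname{Re}[e^{i\zz}]] = \E[\cos \zz]$. By linearity of expectation it therefore suffices to prove that for every $t \in [-1,1]$,
\[
1 - \tfrac{1}{2}t^2 \;\le\; \cos t \;\le\; 1 - \tfrac{1}{3}t^2,
\]
and then take $\E[\cdot]$ over the distribution of $\zz$, which is supported in $[-1,1]$.

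Both inequalities follow from the identity $1 - \cos t = 2\sin^2(t/2)$. For the lower bound on $\cos t$ — which in fact holds for all real $t$ — I would use $|\sin s| \le |s|$ to get $1 - \cos t = 2\sin^2(t/2) \le 2(t/2)^2 = t^2/2$, i.e.\ $\cos t \ge 1 - t^2/2$. For the upper bound on $\cos t$, restricted to $|t| \le 1$, I need $2\sin^2(t/2) \ge t^2/3$; by evenness it suffices to treat $t \in [0,1]$. On $[0,1/2]$ the function $\sin$ is concave (being concave on all of $[0,\pi/2]$), so it lies above the chord through $(0,0)$ and $(1/2,\sin(1/2))$, giving $\sin(t/2) \ge 2\sin(1/2)\cdot(t/2) = \sin(1/2)\, t$ for $t\in[0,1]$. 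Hence $2\sin^2(t/2) \ge 2\sin^2(1/2)\, t^2 \ge t^2/3$, where the last step uses $\sin^2(1/2) \ge 1/6$, i.e.\ $\sin(1/2) \ge 1/\sqrt{6} \approx 0.408$, which holds since $\sin(1/2) \approx 0.479$. Therefore $1 - \cos t = 2\sin^2(t/2) \ge t^2/3$ on $[-1,1]$.

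Combining, for all $t\in[-1,1]$ we have $1 - \tfrac12 t^2 \le \cos t \le 1 - \tfrac13 t^2$; taking expectations over $\zz$ yields $1 - \tfrac12\E[\zz^2] \le \E[\cos\zz] = \E[\operatorname{Re}[e^{i\zz}]] \le 1 - \tfrac13\E[\zz^2]$, which is the claim.

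The only slightly delicate point — and the one place where the hypothesis $\zz \in [-1,1]$ and the specific constant $1/3$ are used — is the numerical estimate $\sin(1/2) \ge 1/\sqrt{6}$ (equivalently $\cos t \le 1 - t^2/3$ on $[-1,1]$); I expect this to be the main, though very minor, obstacle. An alternative to the concavity-plus-numerics route is to set $g(t) = 1 - t^2/3 - \cos t$, note $g(0) = g'(0) = 0$, and check $g'(t) = \sin t - \tfrac{2}{3}t > 0$ on $(0,1]$ by tracking the sign change of $g''(t) = \cos t - \tfrac{2}{3}$ at $t = \arccos(2/3)$; either route works, and the constant $1/3$ has enough slack that the estimate is robust.
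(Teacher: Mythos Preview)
Your proof is correct and follows the same overall strategy as the paper: reduce to the pointwise inequalities $1-\tfrac12 t^2 \le \cos t \le 1-\tfrac13 t^2$ for $t\in[-1,1]$ and then take expectations. The only difference is in how you justify the pointwise bounds: the paper simply invokes the Taylor expansion $\cos t = 1 - t^2/2! + t^4/4! - \cdots$ (the alternating-series structure gives $\cos t \ge 1 - t^2/2$ and $\cos t \le 1 - t^2/2 + t^4/24 \le 1 - t^2/3$ for $|t|\le 1$), whereas you go through the half-angle identity and a concavity/numerical estimate for $\sin(1/2)$. Both routes are elementary; the Taylor-series argument is a bit shorter and avoids the numerical check, while your version is spelled out in more detail.
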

\begin{proof}
The Taylor expansion of $\operatorname{Re}e^{iz}=cos(z)$ is $1-\frac{z^2}{2!}+\frac{z^4}{4!}-\cdots$. Therefore
$$
1-\frac{z^2}{2} \leq \operatorname{Re}[e^{i z}]\leq 1-\frac{z^2}{3} 
$$
provided that $z\in [-1,1]$. 
\end{proof}
Using the lower bound in \Cref{claim:conversion}, by taking  $\zz = h(\x+\y_1+\ldots+\y_N) - f(\x)$, we get 
\begin{equation*}
\E\left[\operatorname{Re}[h'(\x+\y_1+\ldots+\y_N)  f'(\x)]\right]\geq 1-\eps/2 - 2^{-\Omega(N)}
\end{equation*}
We apply \cref{claim:fourier} same as before to obtain subspaces $V,U$ and 
\begin{equation}\label{eq:hmultapproxv}
\E\left[\operatorname{Re}[h'(\x-\y_1-\ldots-\y_N+\v)  f'(\x)]\right]\geq 1-\eps/2 - 2^{-\Omega(N)}
\end{equation}
Define a randomized $k$-linear-junta $\r:\F_p^n\rightarrow [0,1]$ as follows. Sample $\y_1,\cdots,\y_N,\v$. Then for every $u\in U$ and $v\in V+u$, set
$$\r(v) = h(u-\y_1-\cdots-\y_N+\v).$$
Observe that for every $x\in \F_p^n$, the randomized functions 
$\r(x)$ and $ h(x-\y_1-\ldots-\y_N+\v)$ have identical distributions, and therefore, $ e^{-\pi i \r(x) }$ has the same distribution as $h'(x-\y_1-\cdots-\y_N+\v)$. Combining this with \cref{eq:hmultapproxv} implies
\begin{equation*}
\E\left[\operatorname{Re}[e^{-\pi i \r(\x) }  f'(\x)]\right]\geq 1-\eps/2 - 2^{-\Omega(N)}
\end{equation*}
By an averaging argument, there is a $k$-linear-junta $g:\F_p^n\rightarrow [0,1]$ that 
\begin{equation*}
\E\left[\operatorname{Re}[ e^{-\pi i g(\x)} f'(\x)]\right]\geq 1-\eps/2 - 2^{-\Omega(N)}.
\end{equation*}
Finally, by using the upper bound in \cref{claim:conversion}, we get that 
\begin{equation*}
 \E\left[|g(\x)-f(\x)|^2\right] \leq 1 - 4\E\left[\operatorname{Re}[ e^{-\pi i g(\x)} f'(\x)]\right]\leq 2\eps + 2^{-\Omega(N)}
\end{equation*}
which finishes the proof.

\section{Sketching over abelian groups of bounded exponent}
\label{sec:groups}
Let $G$ be a finite abelian group. We generalize \Cref{theorem:main} and \Cref{theorem:main2} to the case where $f:G\rightarrow \{0,1\}$ and $f:G\rightarrow [0,1]$, respectively. Even though we are mostly interested in $G=\mathbb{Z}_p^n$, it turns out to be useful to consider this more general formulation.
We introduce the required modifications to the definitions and the proofs. 
\paragraph{Protocols.}
The concept of broadcasting protocol and streaming protocol for the function $F(x_1,\cdots,x_N) = f(x_1+\cdots+x_N)$ is defined as before.

\paragraph{Sketching.}
We modify the previous definition of sketching so that it will be meaningful for arbitrary abelian groups.
Let $H$ be an arbitrary subgroup of $G$. Let $Q = G/H$ be the quotient group. A function $g:G\rightarrow \{0,1\}$ is  \emph{$H$-invariant} if it is constant on every coset $H+w$ for all $w\in Q$. Note that such $g$ can be factored as $g(x) = h(q(x))$ where $q:H\rightarrow Q$ is defined by $q(x) = x+H$ and $h: Q\rightarrow \{0,1\}$ is an arbitrary function. A function $g:G\rightarrow \{0,1\}$ has \emph{linear complexity} $r$ if there is a subgroup $H\leq G$ so that $g$ is $H$-invariant and also $|G/H|\leq r$. Note that for functions $g:\ftwo^n\rightarrow \{0,1\}$, the notion of  $k$-linear-junta is equivalent to linear complexity $2^k$. 

A \emph{linear sketch} of \emph{complexity $r$} is a distribution $\g:G\rightarrow \{0,1\}$ over functions $g:G\rightarrow \{0,1\}$ of linear complexity $r$.
We have the following two theorems for functions $g:G\rightarrow \{0,1\}$
and $g:G\rightarrow [0,1]$.

\paragraph{Simulation theorems.}
Before stating the theorems, we need to introduce one parameter of the group $G$ that will be important here. 
Let $G \cong \Z_{m_1}\times \cdots \times\Z_{m_n}$ where each $\Z_{m_i}$ is the cyclic group of order $m_i$.  Let $m$ --- called the \emph{exponent} of $G$ --- be the least common multiple of $m_i$'s. Now, we can state the simulation theorems. 
Same as before, let $f:G \rightarrow \{0,1\}$ (respectively, $f:G \rightarrow [0,1]$) and define $F: G^N\rightarrow \{0,1\}$ (respectively, $F: G^N\rightarrow [0,1]$) by $F(x_1,\cdots,x_N) = f(x_1+\cdots+x_N)$.
\begin{theorem}\label{theorem:main3}
Let $G \cong  \Z_{m_1}\times \cdots \times  \Z_{m_n}$ be an abelian group of exponent $m$. Let $N \ge 10n\log m$ and suppose that $F$ has a one-way broadcasting protocol with $c$ bits of communication per message and success probability $q$.
Then there exists a linear sketch of complexity $r$ which computes $f$ with success probability $q-2^{-\Omega(N)}$, where $r=m^{O(c)}$.
\end{theorem}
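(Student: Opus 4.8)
The plan is to mirror the proof of \Cref{theorem:main} line by line, replacing ``linear subspace of $\ftwo^n$ / linear independence'' by ``subgroup of $G$ / dissociated family in the dual group $\widehat G$'', and invoking the bounded exponent $m$ only at the two places where $\ftwo^n$-specific facts were used. By Yao's minimax principle it suffices to fix an arbitrary distribution $D$ on $G$ and produce a function $g$ of linear complexity $r=m^{O(c)}$ with $\Pr_{\x\sim D}[g(\x)=f(\x)]\ge q-2^{-\Omega(N)}$. As before I would introduce an auxiliary $(N{+}1)$-st player, sample $\x_1,\dots,\x_N$ uniform in $G$ and $\x\sim D$, and set $\x_{N+1}=\x-\x_1-\dots-\x_N$ so that $\x_1+\dots+\x_{N+1}=\x$. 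Fixing the best shared randomness $r^*$ and running the same averaging argument over the $2^{cN}$ possible transcripts $\pi=(m_1,\dots,m_N)$ of the first $N$ players gives a $\pi$ with $\prod_{i=1}^N\alpha_i=a(\pi)\ge 2^{-(c+1)N}$ and conditional success probability $b(\pi)\ge q-2^{-N}$, where $A_i=\{x\in G:M_i(x,m_1,\dots,m_{i-1},r^*)=m_i\}$ and $\alpha_i=|A_i|/|G|$; composing with the $(N{+}1)$-st player's message and the protocol's output rule yields $h:G\to\{0,1\}$ with $\Pr[h(\x-\y_1-\dots-\y_N)=f(\x)]\ge q-2^{-N}$ for $\y_i\in A_i$ uniform and independent. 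All of this is verbatim from \Cref{sec:F2}.

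For the Fourier step I work over $\widehat G$ (characters $\gamma:G\to\mathbb C^\times$), pass to $h'=1-2h$ and $f'=1-2f$ valued in $\{-1,1\}$ so that $\Pr[h=f]=\tfrac12(1+\E[h'f'])$, and define $B=\{i:\alpha_i\ge 2^{-2(c+1)}\}$ (so $|B|\ge N/2$ by \Cref{eq:a}) and $S=\{\gamma\in\widehat G:\sum_{i\in B}|\widehat{\varphi_{A_i}}(\gamma)|^2\ge |B|/2\}$. The two $\ftwo$-specific ingredients now need replacements. First, in place of \Cref{lemma:chang} I use Chang's lemma over a general finite abelian group (equivalently Rudin's inequality): if $\gamma_1,\dots,\gamma_k\in\widehat G$ are dissociated then $\sum_{j=1}^k|\widehat{\varphi_A}(\gamma_j)|^2=O(\log 1/\alpha)$; applying this to each $A_i$ with $i\in B$, summing over $i\in B$, and using $\alpha_i\ge 2^{-2(c+1)}$ on the right and $\gamma_j\in S$ on the left gives $\tfrac12 k|B|\le O(c)|B|$, so any dissociated subset of $S$ has size $O(c)$; fix a maximal one, $T$. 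Second, in place of ``a maximal linearly independent subset spans'', I observe that $T$ generates $S$: for $s\in S\setminus T$ the set $T\cup\{s\}$ is not dissociated, so $\epsilon s+\sum_{t\in T}\epsilon_t t=0$ with coefficients in $\{-1,0,1\}$ not all zero, and $\epsilon\ne 0$ (else $T$ is not dissociated), hence $s\in\langle T\rangle$. Since $\widehat G$ has exponent $m$, a subgroup generated by $|T|$ elements has at most $m^{|T|}=m^{O(c)}$ elements, so $|\langle S\rangle|\le m^{O(c)}$. Set $H=\langle S\rangle^{\perp}\le G$; then $|G/H|=|\langle S\rangle|\le m^{O(c)}$ and $\widehat{\varphi_H}(\gamma)=\mathbf 1[\gamma\in\langle S\rangle]$.

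With $H$ playing the role of $V$, the analog of \Cref{claim:fourier} goes through: expanding in the Fourier basis over $G$, using $\widehat{\varphi_H}=\mathbf 1_{\langle S\rangle}$, $|\widehat{h'}|\le1$, and that replacing $\y_i$ by $-\y_i$ only conjugates the relevant Fourier coefficients,
$$
\bigl|\E[h'(x-\y_1-\dots-\y_N)]-\E[h'(x-\y_1-\dots-\y_N+\v)]\bigr|\ \le\ \sum_{\gamma\notin\langle S\rangle}\ \prod_{i=1}^N\bigl|\widehat{\varphi_{A_i}}(\gamma)\bigr|.
$$
For $\gamma\notin\langle S\rangle$ we have $\gamma\notin S$, so by Markov's inequality a constant fraction of $i\in B$ have $|\widehat{\varphi_{A_i}}(\gamma)|$ bounded away from $1$ by an absolute constant; hence each summand is $2^{-\Omega(N)}$, and since there are at most $|G|\le m^n$ of them the whole sum is $m^n2^{-\Omega(N)}=2^{-\Omega(N)}$ precisely because $N\ge 10n\log m$ — the only place this stronger hypothesis (rather than $N\ge10n$) is used. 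Multiplying by $f'(\x)$, averaging over $\x$, and combining with $\E[h'(\x-\y_1-\dots-\y_N)f'(\x)]\ge 2q-1-2^{-\Omega(N)}$ gives $\Pr[h(\x-\y_1-\dots-\y_N+\v)=f(\x)]\ge q-2^{-\Omega(N)}$ for $\x\sim D$, $\y_i\in A_i$, $\v\in H$ uniform. Finally $w(x):=\E_{\y,\v}[h(x-\y_1-\dots-\y_N+\v)]$ is $H$-invariant (averaging over $\v\in H$ makes it constant on each coset), so $w=W\circ q$ with $q:G\to G/H$; the randomized function $\g$ that on input $x$ independently outputs $1$ with probability $W(q(x))$ is a distribution over functions of linear complexity $|G/H|\le m^{O(c)}$ whose value at $x$ has the same $\{0,1\}$-distribution as $h(x-\y_1-\dots-\y_N+\v)$, hence it computes $f$ with probability $q-2^{-\Omega(N)}$ on $D$; averaging over its randomness produces the deterministic $g$, and minimax concludes, with $r=m^{O(c)}$.

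The main obstacle is the Fourier-analytic core: one needs the correct quantitative form of Chang's lemma for an arbitrary finite abelian group — phrased via dissociated families and obtained from Rudin's inequality rather than $\ftwo$-linear independence — and, more essentially, the step in which bounded exponent turns ``the noticeable spectrum has bounded dissociation-dimension'' into ``the noticeable spectrum lies in a subgroup of index $m^{O(c)}$''. It is exactly this step that is responsible both for the $\log m$ blow-up in the sketch complexity and for the requirement $N=\Omega(n\log m)$ that there be enough players to guarantee mixing of $\y_1+\dots+\y_N$ in $G$; everything else is a routine transcription of \Cref{sec:F2}.
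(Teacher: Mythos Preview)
Your proposal is correct and follows essentially the same route as the paper: reduce via Yao and averaging to sets $A_1,\dots,A_N$ and a function $h$, replace linear independence in $\ftwo^n$ by dissociation in $\widehat G$, invoke the general Chang/Rudin inequality to bound the size of a maximal dissociated subset of the large-spectrum set $S$ by $O(c)$, use bounded exponent to get $|\langle S\rangle|\le m^{O(c)}$, take $H$ to be the annihilator, and rerun the Fourier mixing argument with the $|G|\le m^n$ factor absorbed by the hypothesis $N\ge 10n\log m$. The only cosmetic difference is that the paper sets $H=T^\perp$ for $T$ the maximal dissociated subset, whereas you set $H=\langle S\rangle^\perp$; since $S\subset\langle T\rangle$ these give the same complexity bound.
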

And similarly, for bounded real-valued functions we have the following.
\begin{theorem}\label{theorem:main4}
Let $G \cong  \Z_{m_1}\times \cdots \times  \Z_{m_n}$ be an abelian group of exponent $m$ and $N\geq 10n\log m$.
 Suppose that $F$ is $\eps$-approximated by a one-way broadcasting protocol with $c$ bits of communication per message. Then there is a linear sketch $\g:\ftwo^n\rightarrow[0,1]$ of complexity $r$ that $\eps'$-approximates $f$, where $r=m^{O(c)}$ and $\eps' = 2\eps+2^{-\Omega(N)}$. 
\end{theorem}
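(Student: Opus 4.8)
The plan is to follow the template of \Cref{theorem:main2} almost verbatim, replacing the Fourier-analytic ingredients over $\ftwo^n$ with their analogues over a finite abelian group $G$ of exponent $m$, and to combine this with the complex-exponential trick already used for the real-valued case. Concretely, I would first invoke Yao's minimax principle to reduce to proving, for an arbitrary distribution $D$ on $G$, the existence of a (randomized, then derandomized) function $g$ of linear complexity $r=m^{O(c)}$ with $\E_{\x\sim D}[|g(\x)-f(\x)|^2]\le\eps'$. Then I sample $\x_1,\dots,\x_N\in G$ uniformly, $\x\sim D$, set $\x_{N+1}=\x_1+\dots+\x_N+\x$ (working again with $N+1$ players), fix the shared randomness $r^*$ achieving expected squared error $\le\eps$, and apply the same averaging argument to extract a transcript $\pi$ of the first $N$ players such that $a(\pi)\ge 2^{-(c+1)N}$ and the conditional expected squared error is at most $\eps+2^{-N}$. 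This produces sets $A_i\subseteq G$ with $\prod_i |A_i|/|G|\ge 2^{-(c+1)N}$ and a function $h:G\to[0,1]$ with $\E[|h(\x+\y_1+\dots+\y_N)-f(\x)|^2]\le\eps+2^{-N}$ for $\y_i$ uniform in $A_i$.

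Next I switch to characters. The character group $\widehat G$ has size $|G|$, every character $\chi\in\widehat G$ satisfies $\chi^m\equiv 1$ (since $G$ has exponent $m$), and all the Fourier machinery — normalized indicators $\varphi_{A_i}$ with $\widehat{\varphi_{A_i}}(0)=1$, $|\widehat{\varphi_{A_i}}(\chi)|\le1$, the convolution identity — carries over directly. The one genuinely new ingredient is the replacement for \Cref{lemma:chang}: I need a Chang-type bound over $G$. The clean statement is that for $A\subseteq G$ of density $\alpha$, the set of $\chi$ with $|\widehat{\varphi_A}(\chi)|^2\ge \tau$ is contained in a subgroup of $\widehat G$ generated by $O(\tau^{-1}\log(1/\alpha))$ elements, hence of size at most $m^{O(\tau^{-1}\log(1/\alpha))}$ because each generator has order dividing $m$. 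Taking (as before) the heavy set $B=\{i:|A_i|/|G|\ge 2^{-2(c+1)}\}$, which has $|B|\ge N/2$, and the "noticeable" set $S=\{\chi:\sum_{i\in B}|\widehat{\varphi_{A_i}}(\chi)|^2\ge|B|/2\}$, each individual large coefficient must be large for a constant fraction of the $i\in B$, so $S$ sits inside a subgroup $U\le\widehat G$ of size $m^{O(c)}$. Let $V\le G$ be its annihilator; then $|G/V'|$ is controlled, where $V'$ is the subgroup on which all characters of $U$ vanish — this is the $H$ in the definition of $H$-invariance, and $|G/H|=|U|=m^{O(c)}=r$.

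With $U$ and $V'$ in hand, the analogue of \Cref{claim:fourier} goes through unchanged: expanding $\E[h'(x+\y_1+\dots+\y_N)]$ and $\E[h'(x+\y_1+\dots+\y_N+\v)]$ ($\v$ uniform in $V'$) in characters, the difference is supported on $\chi\notin U\supseteq S$, each such $\chi$ contributes $\prod_i|\widehat{\varphi_{A_i}}(\chi)|\le 2^{-|B|/4}\le 2^{-N/8}$, and there are at most $|G|\le m^n$ terms, so the difference is at most $m^n 2^{-N/8}=2^{-\Omega(N)}$ using $N\ge 10n\log m$ — this is precisely where the $\log m$ factor in the hypothesis is spent. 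After that the complex-exponential conversion from \Cref{theorem:main2} applies: set $f'(x)=e^{if(x)}$, $h'(x)=e^{-ih(x)}$, use the lower bound of \Cref{claim:conversion} to turn the squared-error guarantee into $\E[\operatorname{Re}(h'(\x+\y_1+\dots+\y_N)f'(\x))]\ge 1-\eps/2-2^{-\Omega(N)}$, push the shift by $\v$ through the claim, define the randomized $H$-invariant function $\r$ by $\r(u+v)=h(u-\y_1-\dots-\y_N+\v)$ for $u$ a coset representative, derandomize $\y_i,\v$ and the internal randomness by averaging, and conclude via the upper bound of \Cref{claim:conversion} that $\E[|g(\x)-f(\x)|^2]\le 2\eps+2^{-\Omega(N)}=\eps'$.

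I expect the only real obstacle to be pinning down and citing (or proving in an appendix) the correct group-theoretic Chang's lemma: the standard proof over $\ftwo^n$ uses linear independence and a Rudin-type inequality, and over general $G$ one wants the "independent generators generating a bounded subgroup" formulation, being careful that the bound on the number of generators is $O(\log(1/\alpha))$ while the size blow-up to $m^{O(\log(1/\alpha))}$ comes from each generator having order at most $m$. Everything downstream — the definition of linear complexity via $|G/H|$, the character expansion, the mixing estimate, and the complex-exponential rounding — is a routine transcription of the $p=2$ arguments already carried out, with $m^{O(c)}$ playing the role of $2^{O(c)}$ throughout. The remark in the excerpt that the prime-power / prime case of \Cref{theorem:main} generalizes "straightforwardly" is consistent with this: the nonprime subtlety is exactly that $\widehat G$ is a product of cyclic groups rather than a vector space, which is handled by the subgroup-generation version of Chang's lemma rather than by pure linear algebra.
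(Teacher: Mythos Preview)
Your proposal is essentially the paper's own argument: reduce via Yao, fix a transcript by averaging, extract $A_1,\ldots,A_N$ and $h$, pass to characters, use a group version of Chang's lemma to find a small subgroup $U\le\widehat G$ containing the noticeable set $S$, take $H=U^\perp$, prove the mixing estimate (\Cref{claim:fouriergen} in the paper) exactly as you describe, and finish with the complex-exponential conversion from \Cref{theorem:main2}. The paper carries this out verbatim.

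One point to tighten, and it is precisely the one you flagged: your sentence ``each individual large coefficient must be large for a constant fraction of the $i\in B$, so $S$ sits inside a subgroup $U\le\widehat G$ of size $m^{O(c)}$'' does not work as stated. Knowing that each $\chi\in S$ lies in the large spectrum of \emph{some} $A_i$ only puts $S$ inside the union (or the subgroup generated by the union) of the large spectra, which could have size $m^{O(c)\cdot|B|}$. The paper's fix is to use the \emph{dissociated-set} formulation of Chang's lemma (\Cref{lemma:genchang}): take a maximal dissociated subset $\Gamma'\subset S$; for each $i\in B$ Chang gives $\sum_{\gamma\in\Gamma'}|\widehat{\varphi_{A_i}}(\gamma)|^2=O(c)$, and summing over $i\in B$ and using $\gamma\in S$ yields $|\Gamma'|\cdot|B|/2\le |B|\cdot O(c)$, so $|\Gamma'|=O(c)$. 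Then $S\subset\langle\Gamma'\rangle$ automatically (maximality), and $|\langle\Gamma'\rangle|\le m^{|\Gamma'|}=m^{O(c)}$ because $G$ has exponent $m$. With this double-counting in place your outline is complete and matches the paper.
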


We need to provide suitable versions of \Cref{lemma:chang} and \Cref{claim:fourier} here. The other parts of the proof are same as before.
We first have to introduce some notation to do Fourier analysis. 
\paragraph{Fourier analysis.}
 A character $\gamma:G\rightarrow \mathbb{C}^*$ of $G$ is a homomorphism to the group $\mathbb{C}^*$. That is, for every $x,y \in G$ we have $\gamma(x+y) = \gamma(x)\gamma(y)$ and $\gamma(0) = 1$. The dual group of $G$ denoted by $\widehat{G}$ is the group of all characters of $G$. $\widehat{G}$ has the group structure introduced by $(\gamma_1+ \gamma_2)(x) =\gamma_1(x)\gamma_2(x)$. In fact $\widehat{G}$ is isomorphic to $G$. Given any function $f:G\rightarrow \mathbb{C}$, we can write $f$ in its Fourier basis as 
$$
f(x) = \sum_{\gamma \in \widehat{G}} \widehat{f}(\gamma)\gamma(x) 
$$
where $$\widehat{f}(\gamma) = \E_{x\in G} f(x)\overline{\gamma(x)}$$
and $\overline{\gamma(x)}$ is the complex conjugate of $\gamma(x)$. Moreover the convolution operator is defined as before. Given $f,g:G\rightarrow \mathbb{C}$, write $f*g(x) = \E_{y\in G} f(y)g(x-y)$ for $x\in G$, which leads to $\widehat{f*g}(\gamma) = \widehat{f}(\gamma)\widehat{g}(\gamma)$ for all $\gamma\in \widehat{G}$. Again, given a set $A\subset G$, define its normalized function as $\varphi_A = \frac{|G|}{|A|}1_A$.
We need two more definitions. Let $\Gamma \subset \widehat{G}$. Then $\Gamma^\perp$, called the \emph{annihilator} of $\Gamma$ is a subgroup of $G$ defined by $\Gamma^\perp = \{x\in G: \gamma(x) = 1, \forall \gamma \in \Gamma\}$. A subset $\Gamma \subset \widehat{G}$ is called \emph{dissociated} if there are no non-trivial solutions to the equation $$\sum_{\gamma \in \Gamma} a_\gamma\cdot \gamma= 0  $$ where each $a_\gamma\in \{-1,0,1\}$, $1\cdot \gamma = \gamma$,  $(-1)\cdot\gamma = -\gamma$, and $0\cdot\gamma= 0$. 
Let us restate the general form of Chang's lemma~\cite{chang2002polynomial}.
\begin{lemma}\label{lemma:genchang}
Let $A\subset G$ be a set of density $\alpha>0$.  Suppose that $\Gamma\subset \widehat{G}$ is a dissociated set. Then 
$$\sum_{\gamma\in \Gamma} |\widehat{\varphi_A}(\gamma)|^2\leq O(\log \alpha^{-1}).$$
\end{lemma}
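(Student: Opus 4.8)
The plan is to deduce Lemma~\ref{lemma:genchang} from Rudin's inequality for dissociated sets together with a standard duality-and-interpolation trick. Recall Rudin's inequality in the form we need: if $\Gamma\subset\widehat G$ is dissociated and $(c_\gamma)_{\gamma\in\Gamma}$ are arbitrary complex numbers, then $g=\sum_{\gamma\in\Gamma}c_\gamma\gamma$ satisfies $\|g\|_{L^p(G)}\le C\sqrt p\,\big(\sum_\gamma|c_\gamma|^2\big)^{1/2}$ for every $p\ge 2$, where $C$ is an absolute constant and all $L^p$ norms are with respect to the uniform measure on $G$. This is classical; I would either cite it or include the short proof, which amounts to a Riesz-product computation: dissociativity of $\Gamma$ forces all cross terms in $\E_x\prod_{\gamma\in\Gamma}(1+\operatorname{Re}(t_\gamma\gamma(x)))$ to vanish, so the characters $\{\gamma(x)\}_{\gamma\in\Gamma}$ match, in every even moment, a sum of independent unit-modulus phases, and one concludes with Khintchine's inequality.

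Given Rudin's inequality, write $\alpha=|A|/|G|$ and $T=\sum_{\gamma\in\Gamma}|\widehat{1_A}(\gamma)|^2$, so the quantity to bound is $\sum_{\gamma\in\Gamma}|\widehat{\varphi_A}(\gamma)|^2=T/\alpha^2$. Take the test function $g=\sum_{\gamma\in\Gamma}\widehat{1_A}(\gamma)\,\gamma$, whose Fourier transform is supported on $\Gamma$ and agrees with $\widehat{1_A}$ there; by Plancherel, $\langle 1_A,g\rangle=\sum_{\gamma\in\Gamma}|\widehat{1_A}(\gamma)|^2=T$ and simultaneously $\|g\|_2^2=T$. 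On the other hand, Hölder's inequality with conjugate exponents $p,p'$ gives $T=|\langle 1_A,g\rangle|\le\|1_A\|_{p'}\|g\|_p=\alpha^{1-1/p}\|g\|_p$, and Rudin bounds $\|g\|_p\le C\sqrt p\,\|g\|_2=C\sqrt p\,\sqrt T$. Combining, $\sqrt T\le C\sqrt p\,\alpha^{1-1/p}$, hence $T/\alpha^2\le C^2 p\,\alpha^{-2/p}$. Optimizing: when $\alpha\le e^{-1}$ set $p=2\ln(1/\alpha)\ge 2$, so $\alpha^{-2/p}=e$ and $\sum_{\gamma\in\Gamma}|\widehat{\varphi_A}(\gamma)|^2\le 2eC^2\ln(1/\alpha)=O(\log\alpha^{-1})$.

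The remaining regime $\alpha>e^{-1}$ needs no heavy machinery: a dissociated set cannot contain the trivial character (since $1\cdot 0=0$ would be a nontrivial relation), so Parseval gives $\sum_{\gamma\in\Gamma}|\widehat{\varphi_A}(\gamma)|^2\le\alpha^{-2}\sum_{\gamma\ne 0}|\widehat{1_A}(\gamma)|^2=(1-\alpha)/\alpha$, and $(1-\alpha)/\alpha\le e(1-\alpha)\le e\log(1/\alpha)$ using $\log(1/\alpha)\ge 1-\alpha$; so the stated bound holds with an absolute constant in all cases. The only genuinely nontrivial step is Rudin's inequality itself, and the one point requiring a little care — if one wants a fully self-contained treatment — is running the Riesz-product/independence comparison over a general finite abelian group $\Z_{m_1}\times\cdots\times\Z_{m_n}$, where the characters are roots of unity rather than the genuine $\pm 1$ values available over $\ftwo^n$; everything after that is routine bookkeeping.
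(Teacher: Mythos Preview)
The paper does not prove Lemma~\ref{lemma:genchang}; it simply cites it as the general form of Chang's lemma from~\cite{chang2002polynomial}. Your proposal therefore supplies a proof where the paper supplies only a reference.

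Your argument is correct and is in fact the standard modern proof of Chang's inequality: deduce it from Rudin's inequality $\|g\|_p\le C\sqrt{p}\,\|g\|_2$ for functions with dissociated spectrum, dualize via H\"older against $1_A$, and optimize the exponent $p\approx\log(1/\alpha)$. The computation $T\le C\sqrt p\,\alpha^{1-1/p}\sqrt T$ leading to $T/\alpha^2\le C^2 p\,\alpha^{-2/p}$ is clean, and the separate treatment of the dense regime $\alpha>e^{-1}$ via Parseval (using that $0\notin\Gamma$ for dissociated $\Gamma$, which does follow from the paper's definition) is also fine. The only part you flag as nontrivial---Rudin's inequality over a general finite abelian group via Riesz products---works exactly because the paper's definition of dissociated uses coefficients in $\{-1,0,1\}$, which is precisely what the Riesz-product expansion needs to kill all cross terms. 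So the proposal is complete as outlined, and what it buys over the paper is a self-contained treatment rather than a black-box citation.
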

Note that if $\Gamma\subset \widehat{G}$ and $\Gamma'\subset \Gamma$ is the largest dissociated subset of $\Gamma$, then $\Gamma \subset \langle \Gamma' \rangle$, the span of $\Gamma'$. Since $G$ has exponent $m$, we have $|\Gamma|\leq m^{|\Gamma'|}$. Moreover, 
one can show that $\Gamma^\perp \cong G/ \langle\Gamma\rangle$, therefore, $|\langle\Gamma^\perp\rangle|\geq \frac{|G|}{m^{|\Gamma'|}}$.

Finally, the last  part to modify in the proof of \Cref{theorem:main3,theorem:main4} is to obtain a suitable version of \Cref{claim:fourier}. Using Chang's lemma as stated above and following analogous proof as before, we can find the function $h':G\rightarrow \mathbb{C}$ taking values in the unit circle, and also the sets $A_1,\cdots,A_N \subset G$ as discussed in the proof of \Cref{theorem:main}. Also using \Cref{lemma:genchang} we can find a maximal dissociated subset $\Gamma'$ of size $k=O(c)$. Then take the subgroup  $H =\Gamma'^\perp\leq G$ (as the analog of the subspace $V\leq \ftwo^n$) so that $|G/H|\leq m^k$. The following claim about $H$ is what we need. 
\begin{claim}\label{claim:fouriergen}
 Let $\y_1 \in A_1,\ldots, \y_N \in A_N, \v \in H$ be chosen uniformly and independently. Then for every $x \in G$ it holds that
$$
\left| \E[h'(x+\y_1+\ldots+\y_N)] - \E[h'(x+\y_1+\ldots+\y_N+\v)] \right| \le 2^{-N/8}|G|.
$$
\end{claim}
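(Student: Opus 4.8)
The plan is to transcribe the proof of \Cref{claim:fourier} to the group $G$, with the Fourier analysis over $\widehat G$ set up above in place of that over $\ftwo^n$. First I would expand both quantities in the Fourier basis of $\widehat G$. Since $\y_i$ is uniform in $A_i$, for any $g:G\to\mathbb C$ one has $\E_{\y_i\in A_i}[g(x+\y_i)] = (\varphi_{-A_i}\ast g)(x)$; replacing each $A_i$ by $-A_i$ throughout is harmless, since it only conjugates $\widehat{\varphi_{A_i}}(\gamma)$ and hence leaves $|\widehat{\varphi_{A_i}}(\gamma)|$, and therefore the sets $B$, $S$, $\Gamma'$ and the subgroup $H$, unchanged. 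Iterating and using $\widehat{f\ast g}=\widehat f\,\widehat g$,
$$
\E[h'(x+\y_1+\cdots+\y_N)] = \sum_{\gamma\in\widehat G} \gamma(x)\,\widehat{h'}(\gamma)\prod_{i=1}^N \widehat{\varphi_{A_i}}(\gamma),
$$
and, inserting in addition the average over $\v\in H$,
$$
\E[h'(x+\y_1+\cdots+\y_N+\v)] = \sum_{\gamma\in\widehat G} \gamma(x)\,\widehat{h'}(\gamma)\,\widehat{\varphi_H}(\gamma)\prod_{i=1}^N \widehat{\varphi_{A_i}}(\gamma).
$$

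Next I would invoke two structural facts about $H=\Gamma'^\perp$. First, $\widehat{\varphi_H}(\gamma) = \frac1{|H|}\sum_{z\in H}\overline{\gamma(z)}$ equals $1$ when $\gamma\in H^\perp$ and $0$ otherwise, by orthogonality of characters restricted to the subgroup $H$; thus $\widehat{\varphi_H}=1_{H^\perp}$. Second, $H^\perp = (\Gamma'^\perp)^\perp = \langle\Gamma'\rangle$, and since $\Gamma'$ is a maximal dissociated subset of $S$ we have $S\subseteq\langle\Gamma'\rangle = H^\perp$ (both are the duality facts recalled in this section). Consequently the two displayed sums agree on every $\gamma\in H^\perp$, and using $|\gamma(x)|=1$ together with $|\widehat{h'}(\gamma)|\le \E_G|h'| = 1$ (as $h'$ takes values on the unit circle) we obtain
$$
\bigl|\E[h'(x+\y_1+\cdots+\y_N)] - \E[h'(x+\y_1+\cdots+\y_N+\v)]\bigr| \le \sum_{\gamma\notin H^\perp}\ \prod_{i=1}^N \bigl|\widehat{\varphi_{A_i}}(\gamma)\bigr|.
$$

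Finally, to bound each term I would argue exactly as in \Cref{claim:fourier}: if $\gamma\notin H^\perp$ then $\gamma\notin S$, so by the definition of $S$ (and $|\widehat{\varphi_{A_i}}(\gamma)|\le 1$ for all $i$) at least $|B|/2$ indices $i\in B$ satisfy $|\widehat{\varphi_{A_i}}(\gamma)|^2\le 1/2$; since $|B|\ge N/2$ by the analog of \Cref{eq:a} over $G$, this yields $\prod_{i=1}^N|\widehat{\varphi_{A_i}}(\gamma)| \le \prod_{i\in B}|\widehat{\varphi_{A_i}}(\gamma)| \le 2^{-|B|/4} \le 2^{-N/8}$. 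Summing over the at most $|\widehat G| = |G|$ characters gives the bound $2^{-N/8}|G|$. The only genuinely new point compared with the $\ftwo$ case — where everything is linear algebra over a field — is making the duality $H=\Gamma'^\perp \Leftrightarrow H^\perp=\langle\Gamma'\rangle$ and the identity $\widehat{\varphi_H}=1_{H^\perp}$ precise for a general finite abelian group, together with the minor bookkeeping that the convolution identity carries a reflection $A_i\mapsto-A_i$; both are covered by the character theory already recalled here, so the remainder is a verbatim copy of \Cref{claim:fourier}.
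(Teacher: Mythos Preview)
Your proposal is correct and is exactly the approach the paper intends: it explicitly states that the proof is analogous to \Cref{claim:fourier}, and your write-up carries out precisely that transcription, with the right substitutions ($\widehat{\varphi_H}=1_{H^\perp}$ in place of $\widehat{\varphi_V}=1_U$, and $S\subseteq\langle\Gamma'\rangle=H^\perp$ in place of $S\subseteq U$). Your additional remarks about the reflection $A_i\mapsto -A_i$ and the duality $(\Gamma'^\perp)^\perp=\langle\Gamma'\rangle$ are the only points that genuinely differ from the $\ftwo$ case, and you handle them correctly.
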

The proof is analogous to the proof of \Cref{claim:fourier}. By taking $N\geq 10n\log m$ we can make sure that $2^{-N/8}|G|\leq 2^{-\Omega(N)}$.
This finishes the proofs of \Cref{theorem:main3,theorem:main4}.
\bibliographystyle{alpha}
\bibliography{linsketch}

\appendix
\section{Pseudorandom generators}\label{app:prg}
Below we describe a standard application of pseudorandom generators for space bounded computation by Nisan~\cite{N90} in the streaming setting following presentation given by Indyk~\cite{I06}.
Such pseudorandom generators can be used to fool any Finite State Machine which uses only $O(S)$ space or $2^{O(S)}$ states.
Since a sketch consisting of $s$ numbers modulo $p$ can only take $p^s$ values we can think of such sketches as being Finite State Machines with $O(s \log p)$ space.

Assume that a Finite State Machine $Q$ which uses $O(S)$ bits of space uses at most $k$ blocks of random bits where each block is of length $b$.
The generator $G \colon \{0,1\}^m \to (\{0,1\}^b)^k$ expands a small number $m$ of uniformly random bits into $kb$ bits which ``look random'' for $Q$.
Formally, let $U(\{0,1\}^t)$ be a uniform distribution over $\{0,1\}^t$.
For any discrete random variable let $D[X]$ be the distribution of $X$ interpreted is a vector of probabilities.
Let $Q(x)$ denote the state of $Q$ after using the random bits sequence $x$.
Then $G$ is a pseudorandom generator \textit{with parameter} $\epsilon > 0$ for a class $\mathcal C$ of Finite State Machines, if for every $Q \in \mathcal C$:
$$|D[Q_{x \sim U(\{0,1\}^{bk})}] - D[Q_{x \sim U(\{0,1\}^m)} (G(x))]|_1 \le \epsilon,$$
where $|y|_1$ denotes the $\ell_1$-norm of a vector $y$.

\begin{theorem}\cite{N90}
There exists a pseudorandom generator $G$ with parameter $\epsilon = 2^{-O(S)}$ for Finite State Machines using space $O(S)$ such that:
\begin{itemize}
\item $G$ expands $O(S \log R)$ bits into $O(R)$ bits.
\item $G$ requires only $O(S)$ bits of storage (in addition to its random input bits)
\item Any length-$O(S)$ block of $G(x)$ can be computed using $O(\log R)$ arithmetic operations on $O(S)$-bit words.
\end{itemize}
\end{theorem}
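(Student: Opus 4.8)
The plan is to reconstruct Nisan's recursive generator together with the analysis behind it. Group the $bk = R$ random bits consumed by the machine into $2^{\kappa}$ blocks of a common length $\ell = \Theta(S)$ to be fixed below, so that $\kappa = O(\log R)$. Let $\mathcal{H}$ be a family of pairwise independent (``$2$-universal'') hash functions $h : \{0,1\}^{\ell} \to \{0,1\}^{\ell}$, realized say as the affine maps $x \mapsto a x + b$ over the field $\F_{2^{\ell}}$, so that each $h \in \mathcal{H}$ is described by $O(\ell)$ bits. Define generators $G_i : \{0,1\}^{\ell} \times \mathcal{H}^i \to (\{0,1\}^{\ell})^{2^i}$ recursively by $G_0(x) = x$ and
\[
G_i(x; h_1, \dots, h_i) \;=\; \bigl(\, G_{i-1}(x; h_1, \dots, h_{i-1}),\ G_{i-1}(h_i(x); h_1, \dots, h_{i-1}) \,\bigr),
\]
and set $G = G_{\kappa}$, with seed $(x, h_1, \dots, h_{\kappa})$. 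The seed has length $\ell + \kappa \cdot O(\ell) = O(S \log R)$ and the output has $2^{\kappa} \ell = O(R)$ bits, giving the first bullet. For the other two bullets, note that the $j$-th output block of $G_{\kappa}$ is obtained by walking the corresponding length-$\kappa$ root-to-leaf path in the binary recursion tree, at each internal node either copying the current $\ell$-bit word or applying one $h_i$ to it; this costs $O(\kappa) = O(\log R)$ arithmetic operations on $O(\ell) = O(S)$-bit words and uses only $O(\ell) = O(S)$ bits of scratch space (the current word together with the path index), as each $h_i$ is an affine map over $\F_{2^{\ell}}$.

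\textbf{The recycling lemma.} The heart of the analysis is that pairwise independent hashing can ``recycle'' the randomness of a bounded-space machine. Concretely: if $\mathcal{H}$ is $2$-universal on $\{0,1\}^{\ell}$ and $P, Q : \{0,1\}^{\ell} \to [t]$ are arbitrary functions with $t \le 2^{O(S)}$, then for $x \sim U(\{0,1\}^{\ell})$,
\[
\Pr_{h \sim \mathcal{H}}\Bigl[\ \bigl|\, D[(P(x), Q(x))] - D[(P(x), Q(h(x)))] \,\bigr|_1 \;>\; \epsilon_0 \ \Bigr] \;<\; \epsilon_0,
\]
with $\epsilon_0 = 2^{-\Omega(S)}$ whenever $\ell \ge C S$ for a sufficiently large constant $C$ (and the same holds if $Q$ is allowed to depend on the value $P(x)$). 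This is a second-moment estimate: write the $\ell_1$-distance as a sum over pairs $(\sigma, \tau) \in [t]^2$; the expectation over a random $h$ of each term equals the ``ideal'' value, since $h(x)$ is marginally uniform, while its variance over $h$ is $O(2^{-\ell})$ because two distinct inputs collide under $h$ with probability $2^{-\ell}$ (this is where $2$-universality enters); Chebyshev followed by Markov over $h$ and a union bound over the $t^2$ pairs then give the claim, provided $\ell \ge C(S + \log(1/\epsilon_0))$, which for $\epsilon_0 = 2^{-\Theta(S)}$ means $\ell = \Theta(S)$.

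\textbf{Hybrid argument and error bound.} Fix $Q \in \mathcal{C}$, a finite state machine with at most $2^{O(S)}$ states, and view its computation on an $R$-bit input as a read-once branching program of width $2^{O(S)}$. I would bound $\bigl| D[Q_{x \sim U(\{0,1\}^{bk})}] - D[Q_{x \sim U(\{0,1\}^{m})}(G(x))] \bigr|_1$ by induction on the recursion depth $\kappa$. At the top level the $2^{\kappa}$-block computation of $Q$ on $G_{\kappa}$ splits into two halves of $2^{\kappa-1}$ blocks: conditioning on $h_1, \dots, h_{\kappa-1}$ being ``good'' (the inductive hypothesis, which fails with probability at most $(\kappa - 1)\epsilon_0$), the first half, as a function of its $\ell$-bit seed, leaves $Q$ in one of $2^{O(S)}$ states, and the recycling lemma applied to the random $h_{\kappa}$ shows that feeding the second half the recycled seed $h_{\kappa}(\cdot)$ rather than a fresh uniform seed perturbs the joint distribution by at most $\epsilon_0$, except on an $\epsilon_0$-fraction of $h_{\kappa}$. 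Summing over all $\kappa = O(\log R)$ levels yields total deviation $O(\kappa\, \epsilon_0) = O\!\bigl((\log R)\, 2^{-\Omega(\ell)}\bigr)$, which is $2^{-\Omega(S)}$ once $\ell$ is a large enough constant multiple of $S$ (the extra $\log \kappa = O(\log \log R)$ loss is absorbed using $\log \log R = O(S)$, as holds throughout the streaming regime of interest). This yields $\epsilon = 2^{-O(S)}$ and completes the proof.

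\textbf{Main obstacle.} The delicate point is not the second-moment estimate but the reduction that makes it applicable: one must decompose the read-once, $2^{\kappa}$-block branching program induced by $Q$ on $G_{\kappa}$ into a hierarchy of two-read instances so that, at every level, the object handed to the recycling lemma still has only $2^{O(S)}$ relevant states. Keeping this bound uniform across all $\kappa$ levels is precisely what allows the block length $\ell$ to remain $\Theta(S)$ throughout while the accumulated error stays $2^{-\Omega(S)}$.
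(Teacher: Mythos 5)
The paper offers no proof of this theorem --- it is quoted directly from Nisan~\cite{N90} --- so there is no internal argument to compare against; you are reconstructing the original proof. Your reconstruction follows the standard route (the recursive doubling generator seeded by pairwise independent hash functions, a second-moment ``recycling'' lemma, and a hybrid over the recursion tree), and the construction, seed length, per-block computability, and scratch-space claims are all handled correctly. Two points in the analysis need repair.

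First, the recycling lemma is mis-stated: the reference distribution must be $D[(P(\x),Q(\y))]$ for a \emph{fresh} uniform $\y$ independent of $\x$, not $D[(P(\x),Q(\x))]$. Your own justification (``the expectation over a random $h$ of each term equals the ideal value, since $h(x)$ is marginally uniform'') only makes sense for the independent-seed version, and that is also the version the hybrid argument requires, since in the truly random experiment the two halves of the input are seeded independently. Second, and more substantively, the error accounting is off: the hybrid pays once per \emph{internal node} of the recursion tree, not once per level. Writing $\delta_i$ for the error of $G_i$ against width-$2^{O(S)}$ read-once programs reading $2^i$ blocks, the correct recursion is $\delta_i \le 2\delta_{i-1} + O(\epsilon_0)$ (first replace $(x,h_i(x))$ by independent seeds via the recycling lemma, then recurse into each half separately), which unrolls to $\delta_\kappa = O(2^{\kappa}\epsilon_0) = O((R/S)\,\epsilon_0)$ rather than your claimed $O(\kappa\,\epsilon_0)$. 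The theorem survives because $R\cdot 2^{-\Omega(\ell)}$ is still $2^{-\Omega(S)}$ once $\ell \ge C(S+\log R)$, and $\log R = O(S)$ in the streaming regime the paper needs (this hypothesis is also what makes the stated seed length $O(S\log R)$ consistent with block length $\Theta(S)$); but as written your error bound overstates what the argument delivers by a factor of roughly $2^{\kappa}/\kappa$.
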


Using the above results we can reduce the amount of randomness used by a linear sketch modulo $p$ as follows.
Consider any state $\mathcal S$ of the linear sketch of dimension $s$.
From the above discussion it follows that this state can be represented using $O(s \log p)$ bits.
When a streaming update to coordinate $i$ arrives we need only $O(s \log p)$ bits to generate the $i$-th row of the linear sketch matrix so that we can add it to the linear sketch. However, in order to ensure consistency, i.e. to make sure that when the $i$-th row is generate again we get the same result, we still need a lot of memory. Solution to this issue due to~\cite{I06} follows below.

First, assume that the streaming updates $(i,\delta_t)$ are coming in the non-decreasing order of $i$.
In this case we don't have to store the rows of the linear sketch matrix as we can generate them on the fly.
Indeed, after the $i$-th row is generated we can apply it to all updates which contain $i$ since such updates arrive sequentially one after another.
This gives an algorithm which uses only $O(s \log p)$ storage and $O(n)$ blocks of random bits of size $O(s \log p)$ each.
Hence there exists a pseudorandom generator $G$ which given a random seed of size $O(s \log p \log (n/\delta))$ expands it to a pseudorandom sequence using which instead of the rows the sketch matrix only results in a negligible probability of error.
I.e. the resulting state of the sketch can still be used to estimate the value of the function $f$ of interest.

The key observation is that for every fixed random seed the resulting state doesn't depend on the order of updates $(i,\delta_t)$ by the commutativity of addition. Hence one can use $G$ even if the updates come in any order.

\end{document}